\documentclass[12pt]{article}

%\pdfminorversion=4
% NOTE: To produce blinded version, replace "0" with "1" below.

% DON'T change margins - should be 1 inch all around.
\addtolength{\oddsidemargin}{-.5in}%
\addtolength{\evensidemargin}{-1in}%
\addtolength{\textwidth}{1in}%
\addtolength{\textheight}{1.7in}%
\addtolength{\topmargin}{-1in}%

% Packages
\usepackage{amsmath}
\usepackage[dvipsnames]{xcolor}
\usepackage{graphicx,psfrag,epsf}
\usepackage{enumerate}

\usepackage{tikz}

\usepackage{natbib}
%\usepackage{biblatex}
%\addbibresource{sources.bib}

\usepackage[lined,ruled,vlined]{algorithm2e}
\usepackage{url} % not crucial - just used below for the URL 
\usepackage{mathtools,amssymb,amsthm}
\usepackage{soul} %allow highlighting with \hl
\usepackage{url} % not crucial - just used below for the URL 
\usepackage{float}

%\mathtoolsset{showonlyrefs=true} 
\DeclareMathOperator\logit{logit}
\newcommand{\indep}{\perp \!\!\! \perp}

%make a remark environment

\newtheorem{property}{Property}

\newtheorem{theorem}{Theorem}
\newtheorem{corollary}{Corollary}
\newtheorem{lemma}{Lemma}

\title{Penalized regression with negative-unlabeled data: An approach to developing a long COVID research index}

\author{Harrison T. Reeder$^{1*}$, Tanayott Thaweethai$^{1}$ and Andrea S. Foulkes$^{1,2}$ \\ \\
$^{1}$Department of Medicine, Biostatistics \\ Massachusetts General Hospital and Harvard Medical School  \\ Boston, MA USA \\ \\ $^{2}$Department of Biostatistics \\ Harvard TH Chan School of Public Health \\ Boston, MA USA \\
*Contact email: hreeder@mgh.harvard.edu}

\date{}

\begin{document}
\def\spacingset#1{\renewcommand{\baselinestretch}%
{#1}\small\normalsize} \spacingset{1}

\maketitle

\abstract{Moderate to severe post-acute sequelae of SARS-CoV-2 infection (PASC), also called long COVID, is estimated to impact as many as 10\% of SARS-CoV-2 infected individuals, representing a chronic condition with a substantial global public health burden. An expansive literature has identified over 200 long-term and persistent symptoms associated with a history of SARS-CoV-2 infection; yet, there remains to be a clear consensus on a syndrome definition. Such a definition is a critical first step in future studies of risk and resiliency factors, mechanisms of disease, and interventions for both treatment and prevention. We recently applied a strategy for defining a PASC research index based on a Lasso-penalized logistic regression on history of SARS-CoV-2 infection. In the current paper we formalize and evaluate this approach through theoretical derivations and simulation studies. We demonstrate that this approach appropriately selects symptoms associated with PASC and results in a score that has high discriminatory power for detecting PASC. An application to data on participants enrolled in the RECOVER (Researching COVID to Enhance Recovery) Adult Cohort is presented to illustrate our findings. \\ {\bf Keywords:} COVID-19, SARS-CoV-2, PASC, long COVID, penalized regression, feature selection, negative-unlabelled data}

\newpage
\spacingset{1.45} % DON'T change the spacing!
\section{Introduction}
\label{sec:intro}

Globally, over 775 million cases of COVID-19 have been reported, resulting in over 7 million deaths \citep{covid19dashboard}. Among those who survive the acute phase of SARS-CoV-2 infection, many continue to experience long-term and often debilitating symptoms for months or even years, which collectively are known as long COVID, or post-acute sequelae of SARS-CoV-2 infection (PASC). Estimates of PASC incidence vary widely across studies, ranging from 5 to 25\% \citep{nittas2022long}, due to differences in study design as well as inconsistent definitions of PASC \citep{chou2024long}. 

In the absence of a gold-standard definition, we recently developed a symptom-based PASC research index in adults and children based on an informative subset of features \citep{pmid37278994, gross2024characterizing}. This approach uses a Lasso-penalized logistic regression model, with history of SARS-CoV-2 infection as the outcome to serve as a partial label for PASC status, and presence of long-term symptoms as features \citep{tibshirani1996regression}. The fitted model identifies features associated with PASC, and generates a quantitative index that tracks with the likelihood of PASC. The index derived using this approach has been used in studies of PASC in pregnant individuals \citep{metz2024post} and clinical laboratory abnormalities in PASC \citep{erlandson2024differentiation}. 

Using infection as a pseudo-label for PASC reflects the fact that individuals with a history of SARS-CoV-2 infection may or may not have PASC, while individuals without a history of SARS-CoV-2 infection cannot have PASC. As a result, infection history and PASC status form a partially-observed outcome referred to as ``negative-unlabeled'' data: those without history of infection are necessarily PASC ``negative'', while those with history of infection are ``unlabeled''. This is a variant of the more common ``positive-unlabeled'' data sometimes arising from electronic health records, in which some health characteristics are documented (``positive'') but lack of documentation does not mean the characteristic is absent (``unlabeled''). While existing methods for partially-labeled data commonly leverage subsets of individuals with definitive positive and negative documentation \citep{bekker2020learning}, in the absence of a gold-standard definition no individuals have a `positive' observed PASC status. However, because the population of infected individuals is a mixture of those with and without PASC, symptoms associated with PASC will also be more common among infected individuals, while symptoms not associated with PASC are expected to be indistinguishable between infected and uninfected individuals.

Nevertheless, it remains unclear whether an analysis strategy based on infection status as a pseudo-label can appropriately capture information about latent PASC status. Moreover, PASC symptoms exhibit correlation which may affect the selection performance of Lasso-penalized logistic regression, and therefore the resulting PASC index. Additionally, in practice studies of long COVID enroll infected and uninfected populations that may differ with respect to baseline covariates that are potentially associated with the presence of long-term symptoms---the resulting potential for confounding, and impact of techniques to mitigate confounding, remains unclear. 

To address these gaps, we investigate theoretical and operating characteristics of the PASC index using a rigorous conceptual framework for PASC onset and symptoms. We evaluate whether this approach correctly identifies a subset of symptoms that associate with PASC, and whether the PASC index has high discriminatory power for detecting PASC. We further explore balancing weights as a method for handling differential sampling by baseline demographics. Finally, we evaluate the relative performance of this approach compared to a simpler scoring framework defined by counting prevalent long-term symptoms. Evaluation of symptom count as an alternative to the PASC index reflects existing efforts to define PASC as inclusive of broad symptomatology after SARS-CoV-2 infection, such as recent emphasis by the National Academy of Sciences, Engineering, and Medicine that long COVID can manifest as any of hundreds of symptoms affecting every organ of the body \citep{NAP27768}.

%should we just move entirely away from the word 'definition' ?
There remains a pressing need for a standardized research definition that rigorously identifies patients with long COVID while maintaining a high degree of specificity \citep{brode2024long}. As such, arriving at a data-driven and discriminatory PASC index is a significant first step to further research into the risk and resiliency factors as well as the biological mechanisms of PASC and recovery.  
\section{Methods} \label{sec:methods}

\subsection{Approach} \label{subsec:approach}

PASC can be seen as a latent condition following infection that in turn affects the onset or persistence of long-term symptoms. In this context, let $Y$ be binary indicator for the presence of PASC and $A$ be a binary indicator for history of infection. We know that individuals without a history of infection cannot have PASC, while those who do have a history of infection will have a non-zero probability of PASC. We formalize this as follows:
\begin{property}\label{as:infpasc}
    Individuals without SARS-CoV-2 infection cannot develop PASC, i.e., $\Pr(Y=1 \mid A=0) = 0$.
\end{property}
Among those with SARS-CoV-2 infection, the probability of developing PASC is denoted $\Pr(Y=1\mid A=1)=\pi, \;\; 0 < \pi < 1$. The overall population probability of infection is denoted $Pr(A=1) = \alpha$. Let ${\mathbf X} = (X_1, \hdots, X_K)$ denote a vector of indicators for the presence of each of $K$ features (e.g., symptoms) that may depend on the presence of PASC. Further suppose a data generating model with a multiplicative effect of PASC status on symptom prevalence, 
\begin{align} \label{eq:xkgiveny}
    p_{k,y} = \Pr(X_{k}=1 \mid Y=y) = \beta_{0k} \left(\beta_{1k}\right)^{y},
\end{align}

\noindent for $k=1,\dots, K$ and $y=0,1$. For the $k$th feature, $\beta_{0k}$ corresponds to the prevalence among those without PASC, and $\beta_{1k}$ is the risk ratio for having the feature between those with and without PASC. By definition $0 < \beta_{0k} < 1$, and therefore to ensure $p_{k,y} \in (0,1)$, this implies that $0 < \beta_{1k} < 1/\beta_{0k}$.

Features can be further partitioned into $L$ groups, such that features in the same group may be dependent while features in different groups are independent. Formally, let $\mathcal{C}_l$ denote the set of indices of features in group $l$, for $l=1,\hdots,L$. We specifically consider Clayton copulas with common dependence parameter $\rho>0$ to admit positive dependence within each group. Formally, for a group of size $d$ define the Clayton copula function $C(u_1,\dots,u_{d} \mid \rho) = (u_1^{-\rho} + \dots + u_{d}^{-\rho} - d + 1)^{-1/\rho}$. Notating the marginal distribution function for feature $k$ as $F_k(x) = \Pr(X_{k} \leq x)$, the joint distribution of $X_{1},\dots,X_{d}$ is defined by $F(x_{1},\dots,x_{d}) = C(F_{1}(x_{1}), \dots, F_{d}(x_{d})).$ This retains the marginal distributions of each feature and their associations with PASC status, while inducing dependence between features within a group that increases as $\rho$ increases.

%\subsection{Approach}

In general, interest lies in the $\beta_{1k}$'s of Equation~\eqref{eq:xkgiveny}, as they capture information on the relationship between PASC status and the probability of having each specific feature.  However, as $Y$ is unobservable, we are unable to estimate $\beta_{1k}$ directly. We therefore propose fitting a logistic regression model of the form,
\begin{align} \label{eq:modelA}
    \logit [\Pr(A=1)] = \gamma_0 + \gamma_1 X_1 + \hdots + \gamma_K X_K,
\end{align}
which yields a log-odds ratio $\gamma_k$ for each feature reflecting its association with infection status as a pseudo-label of PASC status.

To induce sparsity in the selected features, we estimate this model using a Lasso penalty with regularization parameter $\lambda$. In the simulations and data application presented below, this penalized model is fit via the \texttt{glmnet} package in R, using 10-fold cross-validated misclassification error to select $\lambda$ using the ``one-standard error rule'' implemented in the package \citep{friedman2009glmnet}.

Based on this model, we define the PASC index for individual $i$, $i=1,\hdots,N$ with observed features $x_{ik}$, as  
\begin{align} \label{eq:score}
    \mathcal{S}_i = \sum_{k=1}^K \widehat{\gamma}_k  x_{ik},
\end{align}

\noindent where $\widehat{\gamma}_k$ is the estimated coefficient of $X_k$. In practice, \cite{pmid37278994} round each $\widehat{\gamma}_k$ to the nearest tenth and then multiply by 10, yielding a PASC index that takes on integer values with a minimum of 0.

This approach reflects the hypotheses that the features selected in fitting the model of Equation~\ref{eq:modelA} will be associated with PASC, and that the PASC index of Equation~\ref{eq:score} will have strong discriminatory power for differentiating individuals with and without PASC. In the following section we motivate this strategy theoretically in the case of a single feature. We then show through simulation studies the performance of this index in discriminating true PASC status in the context of multiple features, $\mathbf{X}$.  

\subsection{Theoretical Results}

Assuming a single feature $X$, Equation~\eqref{eq:xkgiveny} reduces to
\begin{align} \label{eq:XgivenY}
    \Pr(X=1 \mid Y=y) = & \beta_0 \beta_1^{y}.
\end{align}
Here $\beta_1$ is the risk ratio of X for $Y=1$ versus $Y=0$. Again, in the absence of a gold-standard definition of PASC, $Y$ is not observable and we therefore can not directly estimate $\beta_1$. However, we do know that individuals without a history of infection can not have PASC, while those who do have a history of infection will have a non-zero probability of PASC. Reflecting the regression approach outlined by Equation~\eqref{eq:modelA}, we therefore consider the approximation of $\beta_1$ using the observable relationship between history of infection and symptom status, via the odds ratio
\begin{align}\label{eq:or}
OR_{A,X} = \left[\theta_1 / (1-\theta_1)\right]\left[\theta_0 / (1-\theta_0)\right],
\end{align}

\noindent where
\begin{align} \label{eq:aGivenx}
\theta_x = \Pr(A=1 \mid X=x), \;\; x=0,1. 
\end{align}

%\subsection{Deriving the relationship between $\beta_1$ and $OR_{A,X}$}
To formalize the conceptual structure relating infection, PASC status, and long-term symptoms, in addition to Property~\ref{as:infpasc} we state another defining characteristic of PASC:

\begin{property}\label{as:latent}
    Feature $X$ depends on prior infection status $A$ only through PASC status $Y$, i.e., $X \indep A \mid Y$.
\end{property}

\noindent Property~\ref{as:infpasc} reflects the definition of PASC as a condition that follows SARS-CoV-2 infection, and Property~\ref{as:latent} reflects the definition of PASC as a condition that encompasses the manifestation of all long-term sequelae following SARS-CoV-2 infection. Under these defining characteristics, the relationship between the observable odds ratio $OR_{A,X}$ defined in Equation~\eqref{eq:or} and the true risk ratio $\beta_1$ in Equation~\eqref{eq:XgivenY} can be represented as described in Theorem~\ref{thm:theorem1}.

\begin{theorem}\label{thm:theorem1}
    Denote population probability of infection $Pr(A=1) = \alpha$, PASC prevalence among the infected $\Pr(Y=1\mid A=1)=\pi$, baseline prevalence of $X$ denoted $\Pr(X=1\mid Y=0) = \beta_0$, and symptom risk ratio by PASC status $\beta_1$. Then for $OR_{A,X}$ as defined in Equation~\eqref{eq:or}, under Properties~\ref{as:infpasc} and \ref{as:latent} 
    \begin{align} \label{eq:or2}
    OR_{A,X} = 1 - \frac{\pi(1 - \beta_1)}{\beta_0  \pi (1-\beta_1) + (1-\beta_0)}.
    \end{align}
\end{theorem}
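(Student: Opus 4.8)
The plan is to exploit the fact that the odds ratio is symmetric in its two arguments, which lets me compute $OR_{A,X}$ by conditioning on $A$ (where Properties~\ref{as:infpasc} and~\ref{as:latent} act directly) rather than on $X$. Writing the expression in Equation~\eqref{eq:or} as the equivalent cross-product ratio
\begin{align*}
OR_{A,X} = \frac{\Pr(A=1,X=1)\,\Pr(A=0,X=0)}{\Pr(A=1,X=0)\,\Pr(A=0,X=1)},
\end{align*}
I would factor each joint probability as $\Pr(A=a,X=x)=\Pr(X=x\mid A=a)\Pr(A=a)$ and observe that the marginal factors $\alpha$ and $1-\alpha$ each appear once in the numerator and once in the denominator, so they cancel. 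This reduces the target to
\begin{align*}
OR_{A,X} = \frac{q_1(1-q_0)}{q_0(1-q_1)}, \qquad q_a := \Pr(X=1\mid A=a),
\end{align*}
and in particular shows the result cannot depend on $\alpha$.

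The core of the argument is then to evaluate $q_0$ and $q_1$ under the two Properties. For $q_0$, Property~\ref{as:infpasc} forces $Y=0$ whenever $A=0$, and Property~\ref{as:latent} then gives $\Pr(X=1\mid A=0)=\Pr(X=1\mid Y=0)=\beta_0$, so $q_0=\beta_0$. For $q_1$, I would apply the law of total probability over $Y$, using $\Pr(Y=1\mid A=1)=\pi$ together with the conditional independence $X\indep A\mid Y$ to replace each $\Pr(X=1\mid Y=y,A=1)$ by $\beta_0\beta_1^{y}$. This yields $q_1=\beta_0[(1-\pi)+\pi\beta_1]=\beta_0\bigl[1-\pi(1-\beta_1)\bigr]$.

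Substituting $q_0$ and $q_1$ into the reduced odds-ratio expression, the common factor $\beta_0$ cancels from numerator and denominator, leaving
\begin{align*}
OR_{A,X} = \frac{(1-\beta_0)\bigl[1-\pi(1-\beta_1)\bigr]}{1-\beta_0+\beta_0\pi(1-\beta_1)}.
\end{align*}
Setting $\delta := \pi(1-\beta_1)$, this is $\tfrac{(1-\beta_0)(1-\delta)}{(1-\beta_0)+\beta_0\delta}$, and a short algebraic step expanding the numerator as $(1-\beta_0)(1-\delta)=\bigl[(1-\beta_0)+\beta_0\delta\bigr]-\delta$ rewrites the ratio as $1-\tfrac{\delta}{(1-\beta_0)+\beta_0\delta}$, which is exactly Equation~\eqref{eq:or2}.

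I expect the main obstacle to be bookkeeping rather than anything conceptual: the cleanest route requires noticing that the odds ratio is symmetric, so that conditioning on $A$ (not $X$) brings the Properties to bear immediately and makes $\alpha$ drop out automatically. If one instead computes $\theta_0$ and $\theta_1$ directly by Bayes' rule, the marginal $\Pr(X=1)$ and $\alpha$ enter explicitly and must be shown to cancel, which is more error-prone. The two assumptions are genuinely used only in evaluating $q_0$ and $q_1$, so I would invoke Property~\ref{as:infpasc} (to eliminate the $Y=1$ branch when $A=0$) and Property~\ref{as:latent} (to strip the $A$-dependence from the symptom probabilities given $Y$) explicitly at exactly those two points.
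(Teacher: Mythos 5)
Your proof is correct, but it takes a genuinely different route from the paper's. The paper computes $\theta_x=\Pr(A=1\mid X=x)$ head-on via Bayes' rule, expanding the denominator by the Law of Total Probability over $Y$ (its Equation~\eqref{eq:bayes}), using Property~\ref{as:infpasc} to write $\Pr(Y=1)=\alpha\pi$ and Property~\ref{as:latent} to strip $A$ from the symptom probabilities; it then forms the odds ratio from the resulting expressions for $\theta_1$ and $\theta_0$, with the cancellation of $\alpha$ and the final simplification left as ``follows algebraically.'' You instead exploit the symmetry of the odds ratio to rewrite $OR_{A,X}$ as the cross-product ratio in terms of $q_a=\Pr(X=1\mid A=a)$, so that the Properties apply immediately ($q_0=\beta_0$, $q_1=\beta_0[1-\pi(1-\beta_1)]$), the independence from $\alpha$ is manifest from the outset rather than emerging from a cancellation, and the closing algebra (via $\delta=\pi(1-\beta_1)$) is short and fully displayed. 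Your route is cleaner and arguably more transparent about where each assumption is used; what the paper's route buys is that it keeps in view the quantities $\theta_x=\Pr(A=1\mid X=x)$, which are exactly what the logistic regression in Equation~\eqref{eq:modelA} models, so the intermediate expressions for $\theta_1$ and $\theta_0$ have independent interpretive value for the method being motivated. One small point of care: Equation~\eqref{eq:or} as typeset in the paper reads as a product of the two odds rather than their ratio; you (correctly) read it as the standard odds ratio, which is clearly the intended definition and the only one under which Theorem~\ref{thm:theorem1} holds.
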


\begin{proof}
Applying Bayes' Rule with the Law of Total Probability in terms of $Y$ in the denominator,

\begin{equation} \label{eq:bayes}
\theta_x = \frac{\Pr(X=x \mid A=1) \Pr(A=1)}{\Pr(X=x \mid Y=1)\Pr(Y=1) + \Pr(X=x \mid Y=0)\Pr(Y=0)}.
\end{equation}

\noindent By Property~\ref{as:infpasc}, $\Pr(Y=1)=\Pr(Y=1 \mid A=1)\Pr(A=1)$, and therefore, $\Pr(Y=0)=\left[\Pr(A=1)^{-1}-\Pr(Y=1 \mid A=1)\right] \Pr(A=1)$. By Property~\ref{as:latent}, $\Pr(X=x \mid Y=y) = \Pr(X=x \mid Y=y, A=a)$. In turn, $\Pr(X=x \mid A=1)$ can be expressed as $\Pr(X=x \mid Y=1) \Pr(Y=1  \mid A=1) + \Pr(X=x \mid Y=0) \Pr(Y=0 \mid A=1)$.

Substituting these results into \eqref{eq:bayes} directly yields
\begin{align}
\theta_{1} = & \frac{ \beta_0\beta_1 \pi + \beta_0 (1-\pi) }{ \beta_0\beta_1 \pi + \beta_0 (\alpha^{-1} - \pi) } = \frac{ \beta_1 + (1-\pi)/\pi  }{ \beta_1 + (1 - \alpha\pi)/(\alpha\pi)  }
\end{align}
\noindent and
\begin{align}
\theta_{0} = & \frac{ (1-\beta_0\beta_1) \pi + (1-\beta_0) (1-\pi) }{ (1-\beta_0\beta_1) \pi + (1-\beta_0) (\alpha^{-1} - \pi) }.
\end{align}

\noindent The resulting relationship in Equation~\eqref{eq:or2} follows algebraically.
\end{proof}

This closed form does not have a directly intuitive interpretation, but it has several important implications explored below that support the use of the model in Equation~\eqref{eq:aGivenx} in the absence of $Y$ to characterizing the association between features and PASC. Notably, $OR_{A,X}$ does not depend on the population rate of infection $\alpha$, which simplifies further examination of this relationship. 

%\subsection{Interpreting the relationship between $\beta_1$ and $OR_{A,X}$}

The most important implication of Theorem~\ref{thm:theorem1} is the capacity of $OR_{A,X}$ to inform knowledge of $\beta_1$, even though $\beta_1$ is unobservable. This is captured by the following corollary: 
\begin{corollary}\label{cor:cor1}
    Under the conditions of Theorem~\ref{thm:theorem1},
    \begin{itemize}
        \item[(a)] $OR_{A,X}=1 \Leftrightarrow \beta_1=1$,
        \item[(b)] $OR_{A,X}<1 \Leftrightarrow \beta_1<1$, and
        \item[(c)] $OR_{A,X}>1 \Leftrightarrow \beta_1>1$.
    \end{itemize}
\end{corollary}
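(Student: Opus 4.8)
The plan is to read all three equivalences directly off the closed form in Equation~\eqref{eq:or2} by analyzing the sign of $OR_{A,X}-1$. Rearranging the theorem gives
\[
OR_{A,X}-1 = -\frac{\pi(1-\beta_1)}{\beta_0\pi(1-\beta_1)+(1-\beta_0)},
\]
so parts (a), (b), and (c) are precisely the statements that $OR_{A,X}-1$ equals zero, is negative, or is positive. It therefore suffices to determine the sign of the right-hand side as a function of $\beta_1$ over the admissible parameter range.

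First I would note that the numerator is $\pi(1-\beta_1)$ with $0<\pi<1$, so its sign is exactly the sign of $(1-\beta_1)$ and it vanishes if and only if $\beta_1=1$. The whole argument then reduces to showing that the denominator $g(\beta_1)=\beta_0\pi(1-\beta_1)+(1-\beta_0)$ is strictly positive throughout the admissible range $0<\beta_1<1/\beta_0$ (recalling $0<\beta_0<1$ and $0<\pi<1$). Granting that, $OR_{A,X}-1$ has sign opposite to $(1-\beta_1)$, and the equivalences follow immediately: $OR_{A,X}=1\Leftrightarrow\beta_1=1$; $OR_{A,X}<1\Leftrightarrow 1-\beta_1>0\Leftrightarrow\beta_1<1$; and $OR_{A,X}>1\Leftrightarrow\beta_1>1$.

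The one step requiring care is the positivity of $g$, since when $\beta_1>1$ the first term $\beta_0\pi(1-\beta_1)$ is negative and could in principle overwhelm $(1-\beta_0)$. I would handle this by observing that $g$ is affine and strictly decreasing in $\beta_1$, its slope being $-\beta_0\pi<0$, so its infimum over the interval $(0,1/\beta_0)$ is approached as $\beta_1\to 1/\beta_0$. Evaluating the limit gives $g(1/\beta_0)=\beta_0\pi(1-1/\beta_0)+(1-\beta_0)=(1-\beta_0)(1-\pi)$, which is strictly positive because both factors lie in $(0,1)$. Hence $g(\beta_1)>0$ on the entire admissible range, the denominator never changes sign, and the sign analysis above is valid, establishing all three parts simultaneously. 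This positivity check, resting on the constraint $\beta_1<1/\beta_0$ together with $\pi<1$, is the only nonroutine ingredient.
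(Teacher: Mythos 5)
Your proof is correct, and it is a streamlined version of what the paper does. The crucial ingredient is shared: your positivity check on $g(\beta_1)=\beta_0\pi(1-\beta_1)+(1-\beta_0)$ is exactly the paper's Lemma~A.1, which the paper proves by rearranging the constraint $\beta_1\leq 1/\beta_0$ into $g(\beta_1)\geq(1-\pi)(1-\beta_0)>0$ --- the same bound you obtain by evaluating the affine, decreasing function $g$ at the endpoint $1/\beta_0$. Where you differ is in organization: the paper proves the six implications piecemeal, establishing the reverse directions ($\beta_1<1\Rightarrow OR_{A,X}<1$ and $\beta_1>1\Rightarrow OR_{A,X}>1$) by separate explicit inequality chains that do not invoke the lemma, and then using the lemma only for the forward directions (sign of the numerator forced by sign of $OR_{A,X}-1$). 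You instead observe that once the denominator is known to be positive, the identity $\mathrm{sign}(OR_{A,X}-1)=\mathrm{sign}(\beta_1-1)$ holds outright, and since both trichotomies ($<1$, $=1$, $>1$) are exhaustive and mutually exclusive, all three biconditionals follow from this single sign computation --- the reverse implications come for free rather than needing their own derivations. Your version buys brevity and makes clear that the denominator's positivity is the only nontrivial fact in play; the paper's version spells out each direction with elementary inequality manipulations, at the cost of length and some logical redundancy.
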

\begin{proof}
The proof is given in Appendix A of the Supplementary Materials.
\end{proof}
This result emphasizes that under the stated conditions, there will be no observed association between $A$ and $X$ if and only if there is no underlying association between $Y$ and $X$. This motivates the selection of features based on the model in Equation~\eqref{eq:modelA} as representing identification of symptoms truly associated with $Y$. Moreover, the direction of association between $X$ and $Y$ is preserved in the association between $A$ and $X$. In other words, the direction of association between a feature and infection status should be the same as between the feature and PASC status.

While Corollary~\ref{cor:cor1} confirms that the direction of association is preserved, it is also of interest to understand the relative magnitude of $OR_{A,X}$ and $\beta_1$. In analyses with a binary outcome and binary exposure, it is known that the odds ratio will always be farther from the null than the risk ratio, but can approximate the risk ratio if the outcome is rare. Comparison of the relative magnitudes of $OR_{A,X}$ and $\beta_1$ is further complicated because they correspond to different outcomes. However, it remains of interest in particular when the odds ratio, $OR_{A,X}$, will numerically exaggerate the underlying risk ratio, $\beta_1$.

The distances of $\beta_1$ and $OR_{A,X}$ from the null can be characterized in absolute terms by $|\beta_1 - 1|$ and $|OR_{A,X}-1|$, and depend on $\pi$ and $\beta_0$ as follows: 
\begin{corollary}\label{cor:mag}
    Define $\phi = \left[(1-\pi)/\pi \right] \left[(1-\beta_0)/\beta_0\right]$. Then by Theorem~\ref{thm:theorem1},
    \begin{itemize}
        \item[(a)] if $\beta_1=\phi$ then $\beta_1 = OR_{A,X}$,
        \item[(b)] if $\beta_1>\phi$ then $|\beta_1-1| < |OR_{A,X}-1|$, i.e., $\beta_1$ is closer to 1 than the $OR_{A,X}$ is to 1, and
        \item[(c)] if $\beta_1<\phi$ then $|\beta_1-1| > |OR_{A,X}-1|$, i.e., the $OR_{A,X}$ is closer to 1 than $\beta_1$ is to 1.
    \end{itemize}
\end{corollary}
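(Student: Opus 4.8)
The plan is to work directly from the closed form in Theorem~\ref{thm:theorem1} and reduce the magnitude comparison to a single scalar inequality in $\beta_1$. First I would rewrite the conclusion of Theorem~\ref{thm:theorem1} in the signed form
\begin{equation}
OR_{A,X} - 1 = \frac{\pi(\beta_1 - 1)}{D}, \qquad D := \beta_0\pi(1-\beta_1) + (1-\beta_0),
\end{equation}
so that the numerator isolates the factor $(\beta_1 - 1)$ that also governs $|\beta_1 - 1|$.

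The first substantive step is to show the denominator $D$ is strictly positive throughout the admissible parameter range. For $\beta_1 \le 1$ this is immediate, since every term is nonnegative and $1-\beta_0 > 0$. The case $\beta_1 > 1$ requires care: here $\beta_0\pi(1-\beta_1) < 0$, and I would invoke the modeling constraint $\beta_1 < 1/\beta_0$ (equivalently $\beta_0\beta_1 < 1$) to bound $\beta_0\pi\beta_1 < \pi$, which yields $D > (1-\beta_0)(1-\pi) > 0$. With $D > 0$ and $\pi > 0$ in hand, the signed identity gives $|OR_{A,X}-1| = (\pi/D)\,|\beta_1 - 1|$, so (setting aside the degenerate null case $\beta_1 = 1$, where both distances vanish) the entire comparison collapses to comparing the scalar $\pi/D$ against $1$, i.e., comparing $\pi$ against $D$.

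The remaining step is purely algebraic. I would compute
\begin{equation}
\pi - D = \beta_0\pi\beta_1 - (1-\beta_0)(1-\pi),
\end{equation}
and observe that its sign is exactly the sign of $\beta_1 - \phi$: dividing the inequality $\beta_0\pi\beta_1 \gtrless (1-\beta_0)(1-\pi)$ by $\beta_0\pi > 0$ gives $\beta_1 \gtrless [(1-\pi)/\pi][(1-\beta_0)/\beta_0] = \phi$. This delivers all three cases at once: $\beta_1 > \phi \Leftrightarrow \pi/D > 1 \Leftrightarrow |OR_{A,X}-1| > |\beta_1-1|$ (case (b)); $\beta_1 < \phi \Leftrightarrow |OR_{A,X}-1| < |\beta_1-1|$ (case (c)); and $\beta_1 = \phi \Leftrightarrow \pi/D = 1 \Leftrightarrow |OR_{A,X}-1| = |\beta_1-1|$. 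For case (a) I would strengthen the equal-distance conclusion to $OR_{A,X} = \beta_1$ by invoking Corollary~\ref{cor:cor1}: since $OR_{A,X}-1$ and $\beta_1 - 1$ always share the same sign, equal distances from $1$ on the same side of $1$ force equality of the values, not merely of their magnitudes.

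I expect the main obstacle to be the positivity of $D$ in the regime $\beta_1 > 1$, which is the only place the boundary constraint $\beta_1 < 1/\beta_0$ is genuinely needed; everything downstream is a one-line factorization and a sign comparison. A secondary point to handle cleanly is the degenerate case $\beta_1 = 1$: there $|\beta_1 - 1| = |OR_{A,X}-1| = 0$, so the strict inequalities of (b) and (c) are vacuous, and I would flag this explicitly rather than dividing by $|\beta_1 - 1|$.
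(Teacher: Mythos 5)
Your proof is correct, but it takes a genuinely different and more economical route than the paper's. The paper first proves part (a) as a standalone lemma (its Lemma A.2) by a chain of algebraic equivalences, and then obtains (b) and (c) by rerunning that chain with inequalities; because one step multiplies by $(1-\beta_1)$, whose sign varies, the paper is forced into a case analysis---ten cases in all, organized by where $\phi$ sits relative to $1$ and $1/\beta_0$ and where $\beta_1$ sits relative to $\phi$ and $1$, tabulated in an appendix table. You avoid the case split entirely by isolating the sign once and for all: writing $OR_{A,X}-1 = (\pi/D)(\beta_1-1)$ with $D = \beta_0\pi(1-\beta_1)+(1-\beta_0)$, proving $D>0$ (this is exactly the paper's Lemma A.1, proved the same way from $\beta_1 \le 1/\beta_0$), and then noting $|OR_{A,X}-1| = (\pi/D)\,|\beta_1-1|$, so everything reduces to the sign of $\pi - D = \beta_0\pi\beta_1 - (1-\beta_0)(1-\pi)$, which is precisely the sign of $\beta_1-\phi$. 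This buys a uniform three-cases-at-once argument and makes transparent \emph{why} $\phi$ is the threshold (it is where $D=\pi$); what the paper's tabulation buys in exchange is the signed comparisons ($\beta_1 < OR_{A,X}$ versus $\beta_1 > OR_{A,X}$) in each regime, which your identity also yields but which you do not record. Two small remarks: for part (a) you do not actually need Corollary 1, since $\pi/D=1$ in the signed identity gives $OR_{A,X}=\beta_1$ directly rather than only equality of distances; and your flag of the point $\beta_1=1$ with $\phi\neq 1$ is apt---there both distances vanish and the strict inequalities of (b)/(c) fail (so ``vacuous'' is not quite the right word; the conclusions are false as stated)---but this is an infelicity of the corollary's statement shared by the paper, whose case table likewise skips $\beta_1=1$ when $\phi \neq 1$.
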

\begin{proof}
    The proof is given in Appendix A of the Supplementary Materials.
\end{proof}

\noindent Corollary~\ref{cor:mag} defines a threshold $\phi$ that depends only on the prevalence of PASC among the infected ($\pi$) and the baseline prevalence of the symptom ($\beta_0$). When $\beta_1$ is below this threshold, then $OR_{A,X}$ will be closer to 1 in magnitude than $\beta_1$, a form of `bias towards the null'. 

Importantly, for some values of $\pi$ and $\beta_0$, the parameter $\beta_1$ will always be strictly less than $\phi$, and therefore $OR_{A,X}$ will be biased towards the null for all possible $\beta_1$: 
\begin{corollary}\label{cor:suff}
    If $\beta_0 < 1 - [\pi/(1-\pi)]$, then $|\beta_1-1| > |OR_{A,X}-1|$ for all possible values of $\beta_1$, i.e., the region $0<\beta_1<1/\beta_0$.
\end{corollary}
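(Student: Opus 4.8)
The plan is to reduce this statement to Corollary~\ref{cor:mag} by showing that the hypothesis forces the threshold $\phi$ to lie strictly above the entire feasible range of $\beta_1$. Recall that Corollary~\ref{cor:mag}(c) asserts that whenever $\beta_1 < \phi$ we have $|\beta_1 - 1| > |OR_{A,X} - 1|$, where $\phi = [(1-\pi)/\pi][(1-\beta_0)/\beta_0]$. Since the admissible values of $\beta_1$ form the interval $(0, 1/\beta_0)$, it suffices to prove that $\phi > 1/\beta_0$: then every admissible $\beta_1$ satisfies $\beta_1 < 1/\beta_0 < \phi$, so $\beta_1 < \phi$, and the desired strict inequality follows immediately from Corollary~\ref{cor:mag}(c).

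The one substantive step, then, is to translate the hypothesis into the inequality $\phi > 1/\beta_0$. Starting from $\beta_0 < 1 - \pi/(1-\pi)$, I would rearrange to $1 - \beta_0 > \pi/(1-\pi)$, then multiply through by $(1-\pi)/\pi > 0$ to obtain $[(1-\pi)/\pi](1-\beta_0) > 1$, and finally divide by $\beta_0 > 0$ to arrive at exactly $\phi > 1/\beta_0$. Each multiplier used here is strictly positive because $0 < \pi < 1$ and $0 < \beta_0 < 1$, so the direction of every inequality is preserved and the strict hypothesis yields the strict conclusion. These manipulations are reversible, so in fact the hypothesis is equivalent to $\phi > 1/\beta_0$, though only the forward direction is needed.

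Combining the two steps closes the argument: the hypothesis gives $\phi > 1/\beta_0$, hence for every $\beta_1$ in the feasible region $(0, 1/\beta_0)$ we have $\beta_1 < 1/\beta_0 < \phi$, and Corollary~\ref{cor:mag}(c) then delivers $|\beta_1 - 1| > |OR_{A,X} - 1|$. I do not anticipate a genuine obstacle, since the result is essentially an immediate consequence of Corollary~\ref{cor:mag} together with a short positive-factor rearrangement; the only point warranting care is confirming that each factor involved is strictly positive so that no inequality flips. It is worth noting in passing that the hypothesis is only satisfiable when $\pi < 1/2$, since the right-hand side $1 - \pi/(1-\pi)$ is positive only in that regime---matching the intuition that the guaranteed bias-toward-the-null phenomenon arises precisely when PASC is relatively uncommon among the infected.
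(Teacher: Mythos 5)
Your proof is correct and follows essentially the same route as the paper: translate the hypothesis into $\phi > 1/\beta_0$, note that the feasible region $0 < \beta_1 < 1/\beta_0$ then lies entirely below $\phi$, and invoke Corollary~\ref{cor:mag}(c). The only difference is that you spell out the positive-factor rearrangement that the paper compresses into ``it can be shown that $\phi > 1/\beta_0$,'' which is a welcome addition but not a distinct approach.
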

\begin{proof}
    If $\beta_0 < 1 - [\pi/(1-\pi)]$, then it can be shown that $\phi > 1/\beta_0$. By definition in Equation~\eqref{eq:XgivenY}, the maximum value of $\beta_1$ is $1/\beta_0$, implying that $\phi > \beta_1$ for all $\beta_1$. By Corollary~\ref{cor:mag}, $\phi > \beta_1$ implies that $|\beta_1-1| > |OR_{A,X}-1|$, which then under these conditions holds for all possible values of $\beta_1$.
\end{proof}

Figure~\ref{fig:pibeta0settings} illustrates the region of $\pi$ and $\beta_0$ values satisfying Corollary~\ref{cor:suff}. The figure also highlights a useful sufficient condition for Corollary~\ref{cor:suff}, namely $\pi < 1/3$ and $\beta_0 < 1/2$: in practice, it is unlikely that the true rate of PASC among infected individuals is greater than one third, and generally symptoms of interest have prevalence among those without PASC of less than one half. Therefore, in most reasonable settings this condition will be met, and the $OR_{A,X}$ will bias towards the null regardless of the true value of $\beta_1$.

\begin{figure}
\begin{center}
    \includegraphics[width=0.5\textwidth]{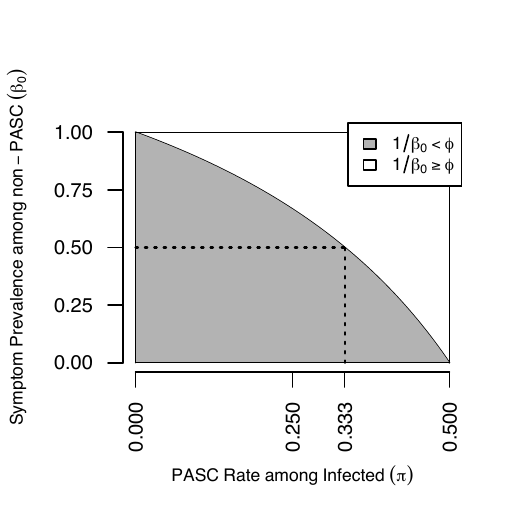}
\caption{Values of parameters satisfying Corollary~\ref{cor:suff}. Dotted lines characterize region meeting sufficient condition of $\pi<1/3$ and $\beta_0<0.5$.}
    \label{fig:pibeta0settings}
\end{center}    
 %   \smallskip

\end{figure}

\subsection{Considerations to address differential sampling}
\label{sec:cov}

In this section we investigate the potential for differential sampling to affect the theoretical relationship derived in Equation~\eqref{eq:or2}, and assess the ability of a balancing weight approach \citep{pmid37278994} to reduce the impact of confounding. Notably, herein we are specifically investigating confounding of the relationship between infection and the features. We consider the potential for one or more additional factors, denoted $\mathbf{Z}$, to be associated with both the sampling of infected individuals, represented by $A$, and the presentation of features, given by $\mathbf{X}$. The additional complexities of considering potential confounding of the association between PASC status and features are discussed in Section~\ref{sec:disc}. 

To simplify presentation, suppose $\mathbf{Z}$ has discretely many strata. Using weights defined by
\begin{align}\label{eq:weight}
w(a,\mathbf{z}) = {\Pr(A=1\mid \mathbf{Z}=\mathbf{z})}/{\Pr(A=a\mid \mathbf{Z}=\mathbf{z})},
\end{align}
a weighted pseudopopulation is created with equal proportions of infected and uninfected individuals within each covariate stratum $\mathbf{z}$. That is, the distribution of potential confounding variables among  uninfected participants matches the distribution among infected participants. Combining these weights with the Law of Total Probability and the definition of conditional probability, the probability of infection given symptom status in the weighted pseudopopulation becomes
\begin{align}
    {\Pr}^{(w)}(A=a\mid \mathbf{X}=\mathbf{x}) = \frac{\sum_{\mathbf{z}} w(a,\mathbf{z})\cdot \Pr(\mathbf{X}=\mathbf{x},A=a,\mathbf{Z}=\mathbf{z})}{\sum_{a'=0}^1 \sum_{\mathbf{z}} w(a',\mathbf{z})\cdot \Pr(\mathbf{X}=\mathbf{x},A=a',\mathbf{Z}=\mathbf{z})}.
\end{align}
In this section we again investigate the theoretical properties using a single binary feature $X$. To formalize the role of $\mathbf{Z}$, we relax Property~\ref{as:latent} with the following:

\begin{property}
\label{as:latent2} For confounding variables $\mathbf{Z}$, infection status $A$, PASC status $Y$, and feature $X$, the following hold:
    \begin{itemize}
        \item[(a)] Feature $X$ depends on prior infection status $A$ only through latent PASC status $Y$ and possibly $\mathbf{Z}$, i.e., $X \indep A \mid Y,\mathbf{Z}$.
        \item[(b)] The probability of PASC given infection $\pi$ does not depend on $\mathbf{Z}$, i.e., $\pi = \Pr(Y=1\mid A=1) = \Pr(Y=1\mid A=1,\mathbf{Z}=\mathbf{z})$ for all $\mathbf{z}$.
        % \item[(b)] PASC status $Y$ depends on confounder $\mathbf{Z}$ only through its effects on infection $A$ and feature $X$, i.e., $Y \indep \mathbf{Z} \mid A,X$.
    \end{itemize}
\end{property}

Property~\ref{as:latent2}a relaxes Property~\ref{as:latent}, by allowing additional dependence between $A$ and $X$ through an exogenous $\mathbf{Z}$. This may occur in practice if, for instance, a study samples infected and uninfected individuals differentially with respect to characteristics such as sex or age, and those characteristics are also independently associated with the presence of certain symptoms. Property~\ref{as:latent2}b encodes that PASC has a constant prevalence among people with SARS-CoV-2 regardless of $\mathbf{Z}$. 

Setting $\theta^{(w)}_x = {\Pr}^{(w)}(A=1\mid X=x)$, the resulting odds ratio estimated using balancing weights is given by,
\begin{align}
    OR^{(w)}_{A,X} = \frac{\theta^{(w)}_1 / \left(1-\theta^{(w)}_1\right)}{\theta^{(w)}_0 / \left(1-\theta^{(w)}_0\right)}.
\end{align}

\noindent Critically, it can be shown that this odds ratio is equivalent to a marginal adjusted odds ratio estimated using inverse probability of treatment weights, where infection $A$ is the treatment and feature $X$ is the outcome,
\begin{align}
    OR^{(w)}_{A,X} = \frac{\Pr(X=1\mid A=1)}{\Pr(X=0\mid A=1)} \bigg/ \frac{\sum_{\mathbf{z}} w(0,\mathbf{z})\cdot\Pr(X=1, A=0,\mathbf{Z}=\mathbf{z})}{\sum_{\mathbf{z}} w(0,\mathbf{z})\cdot\Pr(X=0, A=0,\mathbf{Z}=\mathbf{z})}.
\end{align}
In particular, this contrast would correspond under appropriate causal assumptions to a causal odds ratio reflecting the `average treatment effect among the treated'
\citep{karlson2023marginal,austin2011introduction}. This equivalence in the setting of a single symptom heuristically motivates the use of this balancing weight approach for the multivariable analysis in Equation~\eqref{eq:modelA}.

Unlike the closed form result of Theorem~\ref{thm:theorem1}, the correspondence between $\beta_1$ and $OR_{A,X}$ or $OR^{(w)}_{A,X}$ under Properties~\ref{as:infpasc} and \ref{as:latent2} does not have a simple closed form, as it depends not only on $\alpha$, $\pi$, and $\beta_0$, but also on how $\mathbf{Z}$ is distributed and associated with $X$ and $A$. However, the relationship can still be computed numerically for fixed values of these inputs.  For example, the relationship between $\beta_1$ and $OR_{A,X}$ and $OR^{(w)}_{A,X}$ is presented graphically in Figure~\ref{fig:ORRRconf} for a single binary confounder $Z$ having varying associations with $A$ and $X$, represented by the quantities $RR_{A,Z} = \Pr(A=1\mid Z=1)/\Pr(A=1\mid Z=0)$ and $RR_{X,Z} = \Pr(X=1\mid Z=1)/\Pr(X=1\mid Z=0)$. Other inputs were fixed at realistic values, as described in the figure text. For comparison, each graph is overlaid in grey with the relationship derived in Equation~\eqref{eq:or2} using the same inputs under no confounding, i.e., setting $RR_{A,Z} = RR_{X,Z}=1$. 

\begin{figure}
\begin{center}
	\includegraphics[width=\linewidth]{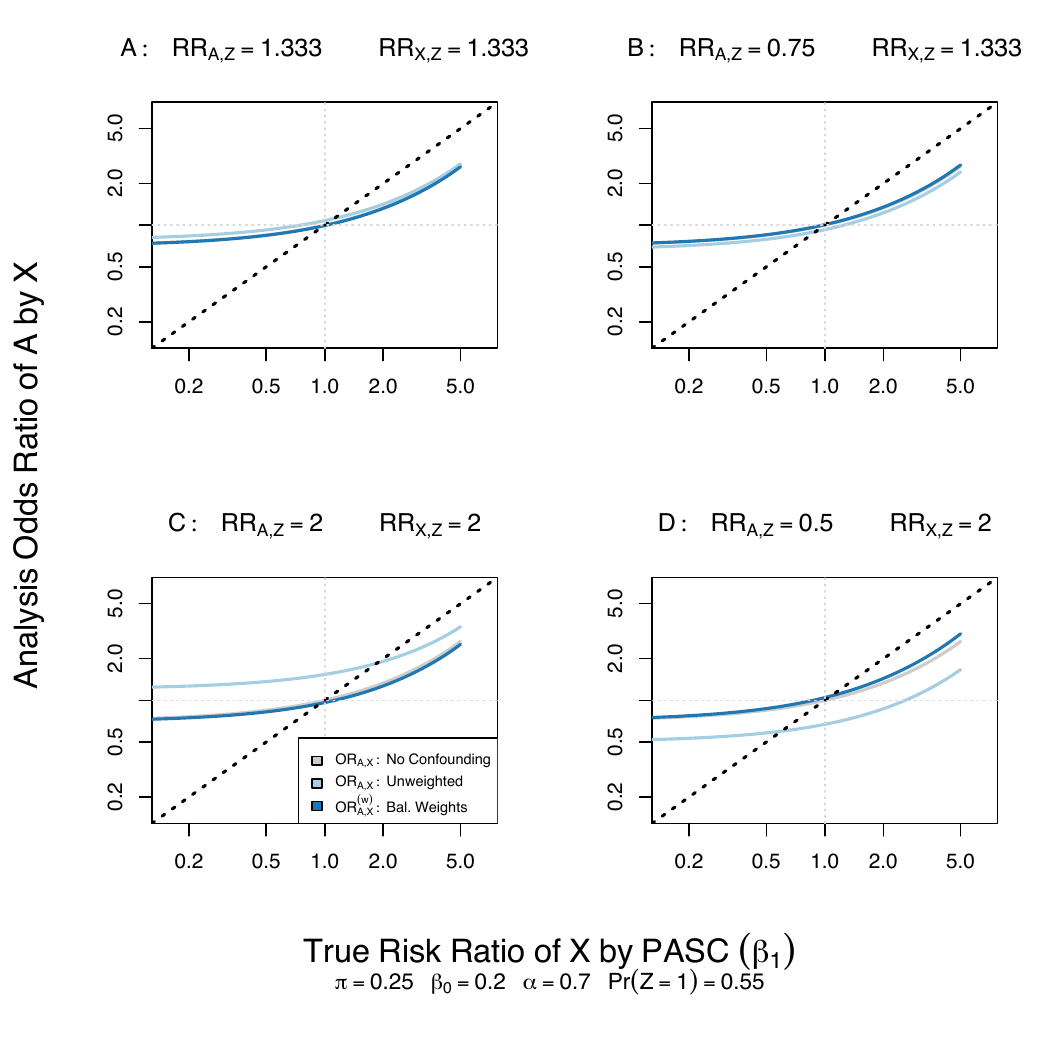}
	\caption{Relationship between quantities under Properties~\ref{as:infpasc} and \ref{as:latent2}. For comparison, $OR_{A,X}$ under `no confounding', i.e., Properties~\ref{as:infpasc} and \ref{as:latent}, also shown.}
 \label{fig:ORRRconf}
\end{center}
% \smallskip
 
\end{figure}

Because the selected values of $\pi=0.25$ and $\beta_0=0.2$ satisfy Corollary~\ref{cor:suff}, the unconfounded theoretical relationship shows that $OR_{A,X}$ is always closer to 1 than $\beta_1$ is to 1. In the presence of confounding by $Z$, the unadjusted $OR_{A,X}$ becomes increasingly biased as $Z$ becomes more strongly associated with $X$ and $A$, as shown by the entire curve being shifted. However, across all confounder settings, the odds ratio estimated using balancing weights exhibits a very similar relationship to $\beta_1$ as would have occurred in the absence of confounding. This demonstrates the potential for balancing weighting to appropriately address bias due to covariate imbalances between infected and uninfected individuals.

\section{Simulation studies}
\label{sec:sim}

The preceding results focus on the special case of a single symptom to build intuition. However, in the presence of multiple symptoms the individually estimated log-odds ratios from the logistic regression model in Equation~\eqref{eq:score} will be conditional on other features, and no longer exactly correspond to $OR_{A,X}$ due to correlation between features and non-collapsibility of the odds ratio. Therefore we conducted simulation studies to further evaluate the discriminatory performance with respect to $Y$ of the PASC index given by Equation~\eqref{eq:score}, based on the full model of Equation~\eqref{eq:modelA}. Results in the absence of a confounding variable $\mathbf{Z}$ are presented here, with additional simulations examining the role of confounding presented in Appendix B of the supplementary materials. 

Data are generated according to the following parameter specifications: $\alpha=0.8$, $\pi=0.2$, $K=40$, and $\beta_{0k}= 0.2$ for $k=1,\dots,K$. A total of 12 symptoms were set to have true PASC associations: a `Low Signal' setting having $\beta_{1k} = 1.3$ for $k = 1,\dots,4$, $\beta_{1k} = 1.5$ for $k=5,\dots,8$, and $\beta_{1k}= 1.7$ for $k=9, \dots, 12$, and `Medium Signal' and `High Signal' settings exponentiating all effects by factors of 1.2 and 1.4 respectively. We set $\beta_{1k} = 1$ for $k=13,\dots,40$ representing features unrelated to PASC.

We also varied the correlation of symptoms, with an uncorrelated symptom setting ($\rho=0$), and then two settings with $\rho=5$ and symptoms in $L=9$ groups of varying sizes:
$|\mathcal{C}_1|= 1$, $|\mathcal{C}_2| = 2$, $|\mathcal{C}_3| = |\mathcal{C}_4| = |\mathcal{C}_5| = 3$, $|\mathcal{C}_6| = 5$, $|\mathcal{C}_7| = 6$, $|\mathcal{C}_8| = 7$, and $|\mathcal{C}_9|= 10$. One correlated symptom setting was `group sparse', meaning that for each symptom group $l$, either $\beta_{1k} = 1$ for all $k \in \mathcal{C}_l$, or $\beta_{1k} \neq 1$ for all $k \in \mathcal{C}_l$. The other correlated symptom setting was non group sparse, with groups having a combination of symptoms with zero and non-zero effects. 500 simulations were performed in each setting.

For each simulation we fit a Lasso-penalized logistic regression model as in \cite{pmid37278994} and described in Section~\ref{subsec:approach}. We report the selection performance of the Lasso-penalized model in identifying symptoms with a true PASC association, using the `true positive/negative rate' indicating the proportion of coefficient estimates that were correctly estimated as null / non-null. Higher values indicate improved selection. We also considered the Kendall's $\tau$ rank-correlation coefficient between the estimated coefficients $\widehat{\gamma}_k$ and the true values of $\beta_{1k}$, characterizing how closely the rank-order of coefficient estimates reflects to the true rank order of PASC association magnitudes.

For comparison, we considered an ad hoc score defined as the total number of symptoms an individual has. This comparator conceptually corresponds to approaches such as \citet{NAP27768} that include the presence of any symptoms when defining PASC. It mathematically corresponds to computation of the PASC index given by Equation~\eqref{eq:score} fixing $\widehat{\gamma}_k=1$ for all $k$. 

Finally, three metrics were used to evaluate performance of the proposed PASC index in discriminating between the true PASC status: 1) the area under the receiver operating characteristic curve (AUC); 2) the area under the precision-recall curve (AUCPR); and 3) the test statistic of a Wilcoxon rank sum test comparing the distributions of the PASC index between those truly with and without PASC.

Table~\ref{tab:sim1} reports the selection performance of the Lasso-penalized logistic regression model across simulation settings. Across low, medium, and high signal settings, the greatest number of symptoms were selected under the non-group sparse correlation setting, followed by the uncorrelated and then group sparse settings. This reflects that under non-group sparsity, symptoms with no true effect are highly correlated with symptoms that do have a true effect and therefore are more likely to be selected; this is evidenced by a decreased true negative rate. 

% latex table generated in R 4.2.2 by xtable 1.8-4 package
% Fri Apr 12 18:43:03 2024
\begin{table}
\caption{Lasso feature selection performance}
\centering
\resizebox{0.9\textwidth}{!}{\begin{tabular}{lcccc}
  \hline
 &  &  &  & Est. vs. True Coef.  \\   
 & \# Features Selected & TPR & TNR & Rank-Correlation$^*$ \\ 
Setting  & Median (IQR) & Mean (SD) & Mean (SD) & Mean (SD)\\
  \hline
Low Signal &&& \\
~~~~Uncorrelated & 7 (3-14) & 0.486 (0.261) & 0.884 (0.163) & 0.530 (0.130) \\  
~~~~Non-Group Sparse & 14 (6-21) & 0.470 (0.235) & 0.697 (0.236)  & 0.137 (0.133) \\ 
~~~~Group Sparse & 4 (2-7) & 0.326 (0.197) & 0.916 (0.167) & 0.408 (0.120) \\
Medium Signal &&& \\
~~~~Uncorrelated & 8 (4-12) & 0.549 (0.250) & 0.905 (0.151) & 0.607 (0.122) \\ 
~~~~Non-Group Sparse & 15 (9-21) & 0.523 (0.218) & 0.676 (0.219) & 0.130 (0.128) \\ 
~~~~Group Sparse & 4 (2-6) & 0.325 (0.179) & 0.944 (0.133) & 0.447 (0.118) \\ 
High Signal &&& \\
~~~~Uncorrelated & 8 (5-11) & 0.598 (0.234) & 0.933 (0.119) & 0.672 (0.123) \\ 
~~~~Non-Group Sparse& 15 (8-22) & 0.545 (0.215) & 0.681 (0.221) & 0.135 (0.141) \\ 
~~~~Group Sparse & 4 (3-6) & 0.343 (0.175) & 0.958 (0.108) & 0.478 (0.115) \\ 
   \hline
\end{tabular}}

\footnotesize Abbreviations: TPR, True Positive Rate; TNR, True Negative Rate.

*Kendall's $\tau$ between estimated regression coefficients and true risk ratios of PASC vs each symptom.
\label{tab:sim1}
\end{table}

Importantly, the Lasso-penalized model tended to be conservative in variable selection, with generally high true negative rates in the uncorrelated and group sparse settings, but modest true positive rates that also varied depending on symptom correlation. We also observed that the estimated coefficients were relatively highly rank-correlated with the true PASC associations. However, under non-group sparse correlated symptoms the rank-correlation between estimated and true associations weakened, as symptoms with no true effect were more likely to have a nonzero coefficient estimate. 

\begin{figure}
\begin{center}
    \includegraphics[width=0.85\textwidth,page=1]{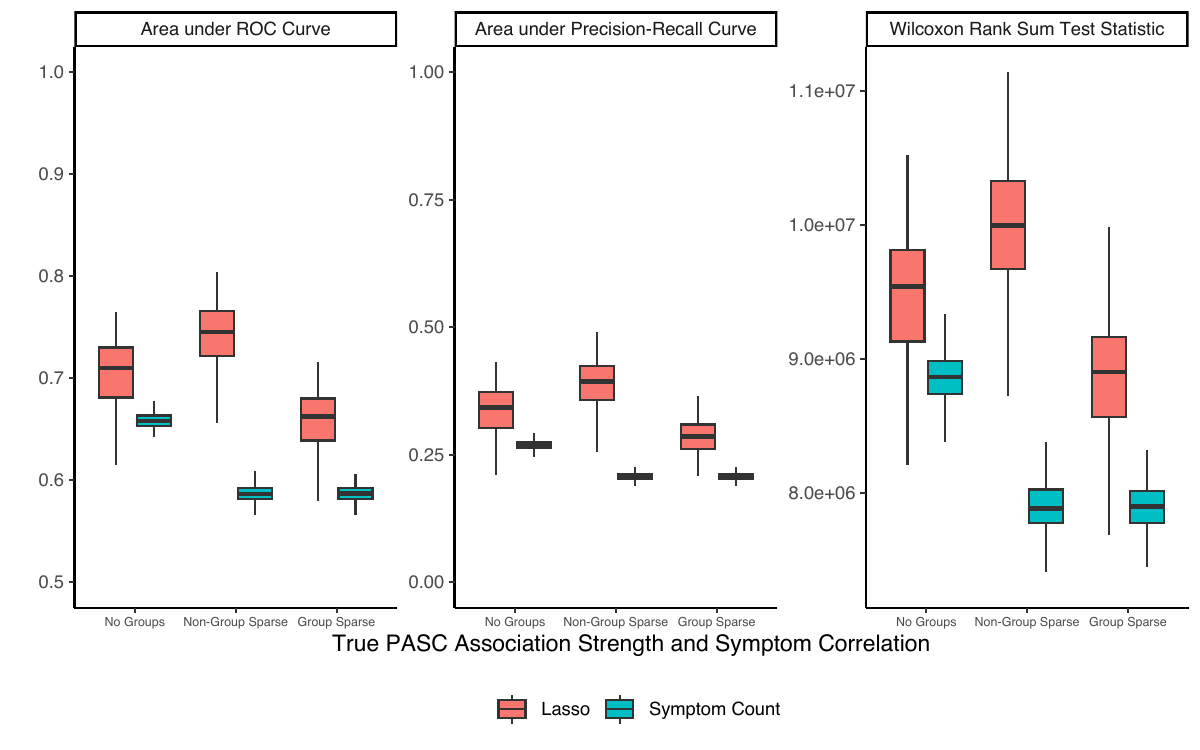}
    \caption{Discriminative performance of PASC index. Simulations are based on `Medium Signal' effect strength. } \label{fig:sim1}      
\end{center}
%    \smallskip
\end{figure}

As shown in Figure~\ref{fig:sim1}, the PASC index derived from Lasso-penalized logistic regression on infection status showed reasonable capacity to discriminate between those with and without PASC, across all forms of correlation. The figure focuses on the `Medium Signal' setting, with analogous results under varying signal strengths (results not shown).
Compared to the impact symptom correlation had on symptom selection, the impact of symptom correlation on PASC index discrimination was minimal for the regression-based PASC index. This seems to indicate that PASC index performance does not substantially degrade even when only some true non-zero symptoms are selected into the index, particularly under group sparsity in which PASC-associated symptoms have strong correlation.

Across all settings, the symptom count consistently underperformed relative to the Lasso-penalized logistic regression approach. In particular, the symptom count notably degraded in performance in the presence of correlation between symptoms.

\section{Application to RECOVER-Adult}
\label{sec:dataapp}

The NIH-sponsored Researching COVID to Enhance Recovery (RECOVER) Adult Cohort (RECOVER-Adult) is an ongoing observational cohort study of adults age 18 and older, with and without a history of SARS-CoV-2 \citep{pmid37352211}. Participants complete surveys recording the presence of symptoms at enrollment and at three-month follow-up intervals. Study time is relative to a time origin defined as the date of an individual's first SARS-CoV-2 infection, or date of a negative test result for uninfected participants. Participants enrolled at any time within 3 years of this date, and for the purpose of this example we considered presence of symptoms at the first study visit at least 6 months after infection/negative test for each participant. Recruitment began on December 1, 2021, with $n=9764$ contributing an eligible study visit at the time of data lock on April 10, 2023. A subset of $n=9702$ had complete data on sex assigned at birth, age, and race/ethnicity and were used in the analysis. 

A summary of cohort demographics by infection status is provided in Appendix C of the supplementary materials. As the demographic differences between infected and uninfected participants may also be associated with symptoms, we used the balancing weighted approach proposed in Section~\ref{sec:cov}. Weights were defined as in Equation~\eqref{eq:weight}, with a total of 30 strata defined by age (18-45, 46-65, $>$65), sex (male, female), and race (non-Hispanic White, non-Hispanic Black, non-Hispanic Asian, Hispanic, Mixed race/other/missing).

A total of 37 candidate symptoms from 12 clinically-defined categories were considered in the cross-sectional analysis, including cardiac, gastrointestinal, neurologic, musculoskeletal, and others. The full symptom list, with symptom rates by infection status, is available in Appendix C of the supplementary materials. The lefthand column of Figure~\ref{fig:fits} shows the symptoms selected using the Lasso-penalized logistic regression with balancing weights. A total of 13 symptoms were selected, with the symptoms with largest coefficients---and therefore largest contributions to the PASC index---being loss or change of smell/taste, post-exertional malaise (PEM), chronic cough, and brain fog. These symptoms have indeed become hallmark symptoms of PASC \citep{pmid37278994}.

There were small differences in the estimates with and without balancing weights (Figure~\ref{fig:fits}). The set of selected symptoms was the same with the exception of shortness of breath (selected by the unadjusted model only) and abnormal movements (selected by the balancing-weighted model only). The differences in coefficients were generally small, with the largest absolute difference for brain fog (balancing weighted: 0.330 vs unadjusted: 0.497). While both models selected the same top 4 symptoms the next largest coefficients in the unadjusted model were heart palpitations and dizziness, 
compared to thirst and chest pain in the balancing weighted model.

\begin{figure}
\begin{center}
    \includegraphics[width=\textwidth]{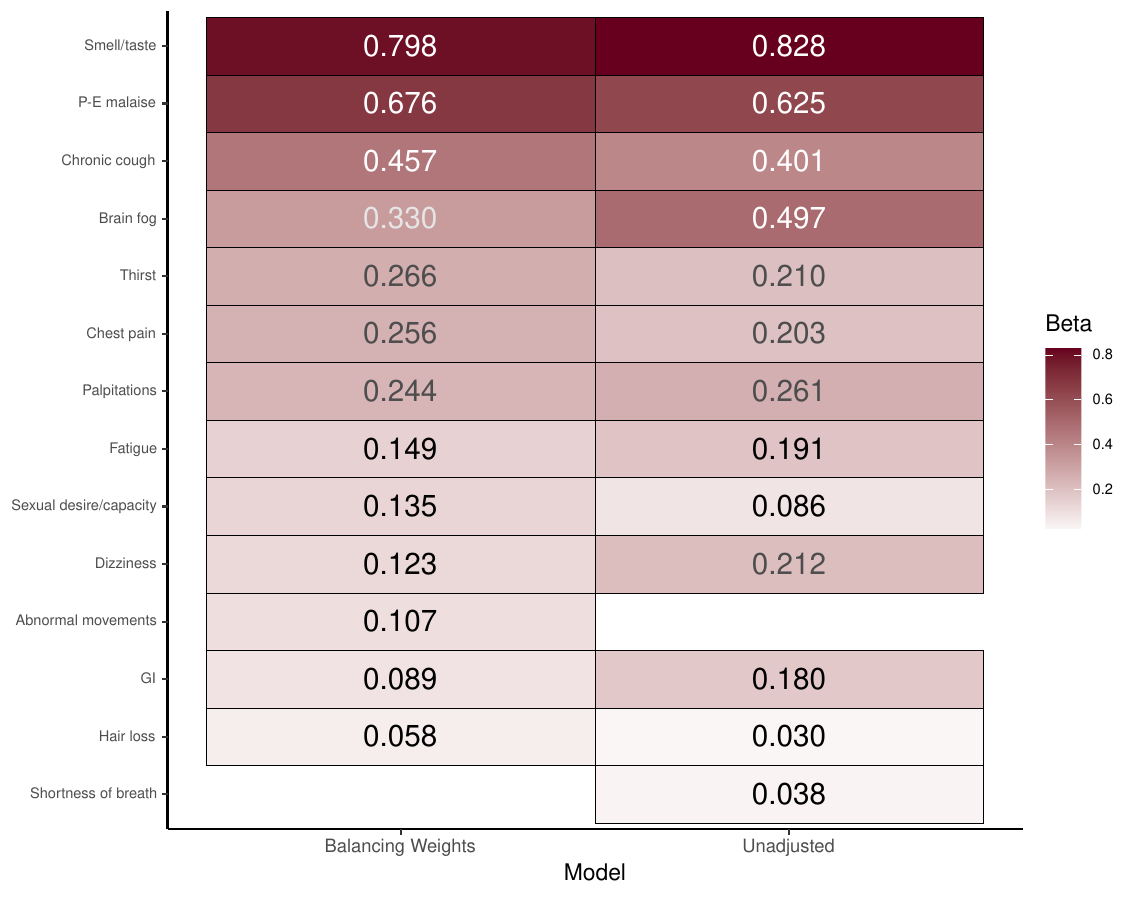}
    \caption{Lasso-penalized logistic regression coefficient estimates using RECOVER-Adult data.}
    \label{fig:fits}    
\end{center}
\end{figure}

Using the coefficients estimated under the balancing weighted analysis, we calculated a PASC index for each individual as in Section~\ref{subsec:approach}, resulting in an index ranging from 0-37 (Appendix C, Figure C.1). Among those with history of infection, 2933 (34.1\%) had an index of 0, compared to 711 (64.1\%) among those without history of infection. The median and interquartile range (IQR) of the index were 2 (0-11) and 1 (0-1), respectively. The correspondence between the PASC index and count of symptoms is illustrated in Appendix C, Figure C.2. Among those with history of infection, 2173 (25.3\%) had a count of 0 symptoms, compared to 521 (47.0\%) among those without history of infection. The median (IQR) symptom counts were 3 (0-8) and 1 (0-3), respectively. 

In practice, the PASC index and symptom count can be thresholded to classify individuals as having PASC. To illustrate the result of thresholding in RECOVER-Adult, we report the rate of PASC among the uninfected for fixed rates within infected participants (Figure~\ref{fig:roc}). Estimates of this curve are presented using 10-fold cross validation, where we fit the Lasso-penalized logistic regression model in each possible subset of 90\% of the data and estimate the index and resulting curve in the remaining 10\%. At any fixed rate of PASC classification among the infected, the corresponding rate of PASC classification in the uninfected will be lower using the PASC index than using symptom count. For example, for classification of PASC among the infected at a fixed rate of 23\% as reported previously, the estimated rate of PASC classification of uninfected individuals in an independent sample is 3.8\% (95\% CI: 2.9-4.8\%), compared to 6.4\% (95\% CI: 4.6-8.2\%) using symptom count. 

\begin{figure}
\begin{center}
    \includegraphics[width=0.8\textwidth]{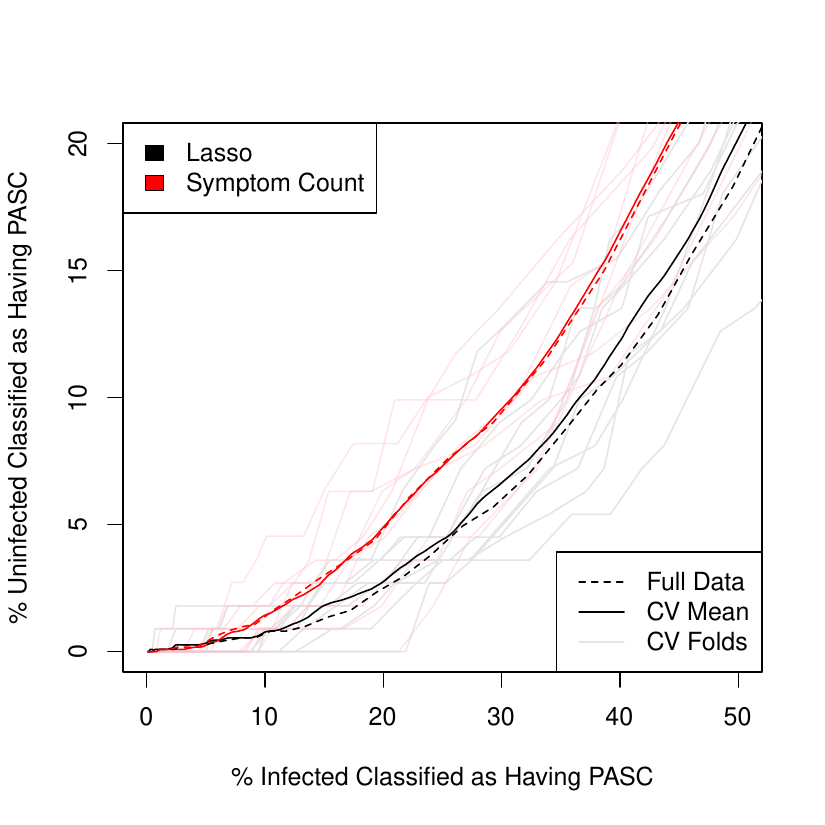}
    \caption{Characteristics of classification rules. Results shown within the same full data sample (dashed), and using 10-fold cross-validation (CV), averaging the folds (solid) and for each individual fold (light).}
    \label{fig:roc}    
\end{center}
%    \smallskip
\end{figure}
\section{Discussion}
\label{sec:disc}

This manuscript provides strong evidence in support of using Lasso-penalized logistic regression, using history of infection as a pseudo label, to identify PASC-associated features and define an associated PASC index. In the absence of a gold standard definition, an important marker of discriminative performance for PASC indices is the ability to identify PASC among infected participants, with minimal misclassification of uninfected participants as having PASC. We demonstrate that the index based on Lasso-penalized logistic regression, compared to a count of symptoms, leads to fewer misclassifications of uninfected participants as having PASC while capturing the same proportion of PASC cases among infected participants. We also found consistent evidence, through simulation results and theoretical results under a single feature (Figure~\ref{fig:ORRRconf}), that differences between weighted and unweighted analyses require both a relatively large imbalance in the stratifying variables by infection status, and a relatively strong association between the stratifying variables and symptoms. Finally, we demonstrated by simulation that the performance of the regression-based PASC index is robust to complex correlation structures between symptoms.

Several alternative but related strategies to the one described herein are viable. For example, in the example provided, model features were defined as the presence of symptoms, while more generally, these features could capture information on incident symptoms, worsening of symptoms, or other health-related characteristics. A machine learning type approach, such as random forests \citep{breiman2001random} or support vector machines \citep{boser1992}, could also be used in place of Lasso-penalized regression, offering the additional advantage of allowing for complex interactions and conditional associations. However, using one of these alternative supervised learning strategies does not inherently address the core challenge posed by the `negative-unlabeled' data inherent in studying PASC absent a gold-standard definition. Further extensions might also consider differential rates of PASC by levels of a confounding variable as well as interactions between symptoms and demographic characteristics on rates of PASC. Finally, while this manuscript demonstrated the utility of a PASC index, further consideration of how this index could potentially be used to develop a PASC classification is an important next step. 

%\clearpage

\section*{Funding}

Support for this project was provided by NIH/NHLBI R01HL162373.

%\section*{Acknowledgments}

\bibliographystyle{apalike}
\bibliography{sourcesshorter}
%\printbibliography

\newpage

\appendix %sets section numbering to letters rather than numbers

% Additional commands to add alpha-numeric prefixes to tables, figures, lemmas, and equations given in the appendix.
\makeatletter
\renewcommand{\thetable}{\thesection.\@arabic\c@table}
\@addtoreset{table}{section}
\makeatother

\makeatletter
\renewcommand{\thefigure}{\thesection.\@arabic\c@figure}
\@addtoreset{figure}{section}
\makeatother

\makeatletter
\renewcommand{\theequation}{\thesection.\@arabic\c@equation}
\@addtoreset{equation}{section}
\makeatother

\makeatletter
\renewcommand{\thelemma}{\thesection.\@arabic\c@lemma}
\@addtoreset{lemma}{section}
\makeatother

%\tableofcontents

\section*{Appendix Introduction}
In these supplementary materials we present additional details and results beyond what could be presented in the main manuscript. To distinguish the two sections, alpha-numeric labels are used in the supplementary materials while numeric labels are used in the main paper. 
Appendix~\ref{app:proofcorr} provides proofs of theoretical results. 
Appendix~\ref{app:confsim} provides additional simulation results. 
Appendix~\ref{app:data} provides additional results from the data application. 

\section{Proofs of Theoretical Results}\label{app:proofcorr}
Because in the present context $\beta_1$ is a risk ratio multiplying a baseline risk $0<\beta_0<1$, then $0 < \beta_0\beta_1 \leq 1$ and equivalently $\beta_1 \leq 1/\beta_0$. We begin with a short result implied by this constraint, that is used in the proofs of Corollaries 1 and 2. 
\begin{lemma}\label{lem:temp1}
    For constants $0 < \pi < 1$ and $0 < \beta_0 < 1$, if $0 < \beta_1 \leq 1/\beta_0$, then 
    \begin{equation}
        \beta_0\pi (1 - \beta_1) + (1-\beta_0) > 0.
    \end{equation}
\end{lemma}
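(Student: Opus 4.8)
The plan is to treat the left-hand side as an affine function of $\beta_1$ and minimize it over the admissible interval. Define $f(\beta_1) = \beta_0\pi(1-\beta_1) + (1-\beta_0)$. Since $\beta_0 > 0$ and $\pi > 0$, the coefficient of $\beta_1$ is $-\beta_0\pi < 0$, so $f$ is strictly decreasing in $\beta_1$. Consequently, over the range $0 < \beta_1 \le 1/\beta_0$, the infimum of $f$ is attained at the right endpoint $\beta_1 = 1/\beta_0$, and it suffices to show that $f(1/\beta_0) > 0$.

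Next I would substitute $\beta_1 = 1/\beta_0$ directly. The term $\beta_0\pi(1 - 1/\beta_0)$ simplifies to $\pi(\beta_0 - 1) = -\pi(1-\beta_0)$, so that $f(1/\beta_0) = (1-\beta_0) - \pi(1-\beta_0) = (1-\beta_0)(1-\pi)$. This factored form is the natural target of the computation, since it exposes the two hypotheses as the two factors.

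Finally, invoking $0 < \beta_0 < 1$ and $0 < \pi < 1$ gives $1 - \beta_0 > 0$ and $1 - \pi > 0$, so the product $(1-\beta_0)(1-\pi)$ is strictly positive. Since $f$ is decreasing and its value at the largest admissible $\beta_1$ is already strictly positive, $f(\beta_1) > 0$ throughout the interval, which establishes the claim.

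I do not anticipate any genuine obstacle here: the statement reduces to a one-line monotonicity argument together with an endpoint evaluation. The only point requiring a moment's care is confirming that the endpoint $\beta_1 = 1/\beta_0$ yields a \emph{strict} inequality rather than equality, which the factorization $(1-\beta_0)(1-\pi)$ makes transparent, as both factors are bounded away from zero by the stated constraints.
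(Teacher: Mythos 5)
Your proof is correct and is essentially the paper's argument in different clothing: the paper rearranges the constraint $\beta_1 \leq 1/\beta_0$ into the chain $\beta_0\pi(1-\beta_1) + (1-\beta_0) \geq \pi(\beta_0-1) + (1-\beta_0) = (1-\pi)(1-\beta_0) > 0$, which is exactly your endpoint evaluation $f(1/\beta_0) = (1-\beta_0)(1-\pi)$ combined with the fact that $f$ is decreasing. Your monotonicity framing is a clean way to package the same computation; no gap.
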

\begin{proof} Rearranging the initial constraint $\beta_1 < 1/\beta_1$, the result follows:
    \begin{align}
        1 - \beta_1 &\geq 1 - 1/\beta_0,
%        \\ \Rightarrow \beta_0\pi (1 - \beta_1) &\geq \pi(\beta_0 - 1)
        \\ \Rightarrow \beta_0\pi (1 - \beta_1) + (1-\beta_0) & \geq \pi(\beta_0 - 1) + (1 - \beta_0),
        \\ \Rightarrow \beta_0\pi (1 - \beta_1) + (1-\beta_0) & \geq (1-\pi)(1-\beta_0) > 0.
    \end{align}

\end{proof}

\subsection{Proof of Corollary 1}
\begin{proof}[Proof of Corollary 1]
    The proof of part (a) follows trivially by assuming either $OR_{A,X}=1$ or $\beta_1=1$.

    For part (b), we first prove the reverse direction by assuming $\beta_1 < 1$. Because $0 < \pi < 1$, it directly follows that $0 < \pi(1-\beta_1) < 1$. Then using the additional fact that $0 < \beta_0 < 1$, this implies the following:
   \begin{align}
        \beta_0\pi(1 - \beta_1) & < \beta_0 ,
        \\ \Rightarrow  \beta_0\pi(1 - \beta_1) + (1 - \beta_0) & < 1,
%        \\ \Rightarrow  \frac{\pi(1-\beta_1)}{(1 - \beta_0)} & > \frac{\pi(1-\beta_1)}{\beta_0\pi(1 - \beta_1) + (1 - \beta_0)} > \pi(1-\beta_1)
        \\ \Rightarrow  1- \frac{\pi(1-\beta_1)}{\beta_0\pi(1 - \beta_1) + (1 - \beta_0)} & < 1- \pi(1-\beta_1) < 1,
    \end{align}
    and therefore $OR_{A,X} < 1$.

    Correspondingly, to prove the reverse direction of part (c), we set $\beta_1 > 1$. As before, it directly follows that $1-\beta_1 < 0$, and therefore that $\pi(1-\beta_1) < 0$. This further implies that
    \begin{align}
        \beta_0\pi (1-\beta_1) + (1-\beta_0) & < 1-\beta_0,
        \\ \Rightarrow \frac{\pi(1-\beta_1)}{\beta_0\pi (1-\beta_1) + (1-\beta_0)} & < \frac{\pi(1-\beta_1)}{1-\beta_0},
        \\ \Rightarrow 1 - \frac{\pi(1-\beta_1)}{\beta_0\pi (1-\beta_1) + (1-\beta_0)} &> 1 - \frac{\pi(1-\beta_1)}{1-\beta_0} > 1,
    \end{align}
    and therefore $OR_{A,X} > 1$.

    To prove the forward direction of part (b) we assume $OR_{A,X}<1$, rearranged as 
    \begin{align}
        \frac{\pi(1 - \beta_1)}{\beta_0 (1-\beta_1) \pi + (1-\beta_0)} &> 0. \label{eq:forwardb}
        % \\ \Rightarrow \frac{\beta_0 (1-\beta_1) \pi + (1-\beta_0)}{\pi(1 - \beta_1)} & < 0
        % \\ \Rightarrow \frac{\pi\beta_0}{1-\beta_0} + \frac{1}{1 - \beta_1} & < 0
        % \\ \Rightarrow \frac{1}{1 - \beta_1} & < - \frac{\pi\beta_0}{1-\beta_0}        
        % \\ \Rightarrow 1 - \beta_1 & > - \frac{1-\beta_0}{\pi\beta_0}        
        % \\ \Rightarrow 1 + \frac{1-\beta_0}{\pi\beta_0} & > \beta_1        
    \end{align}
    By Lemma~\ref{lem:temp1} the denominator of the lefthand side of \eqref{eq:forwardb} must be positive, so therefore the numerator must also be positive. This yields the final result that $\beta_1 < 1$.

    % \begin{align}
    %     0 & < \beta_1 < 1 
    %     \\ \Leftrightarrow 0 & < \pi(1 - \beta_1) < 1 
    %     \\ \Leftrightarrow 0 & < \pi\beta_0(1 - \beta_1) < \beta_0 
    %     \\ \Leftrightarrow  (1 - \beta_0) & < \beta_0\pi(1 - \beta_1) + (1 - \beta_0) < 1
    %     \\ \Leftrightarrow  \frac{\pi(1-\beta_1)}{(1 - \beta_0)} & > \frac{\pi(1-\beta_1)}{\beta_0\pi(1 - \beta_1) + (1 - \beta_0)} > \pi(1-\beta_1)
    %     \\ \Leftrightarrow  1-\frac{\pi(1-\beta_1)}{(1 - \beta_0)} & < 1- \frac{\pi(1-\beta_1)}{\beta_0\pi(1 - \beta_1) + (1 - \beta_0)} < 1- \pi(1-\beta_1) < 1
    % \end{align}

    Finally, to prove the forward direction of part (c), we assume $OR_{A,X}>1$, which yields
    \begin{align}
        \frac{\pi(1 - \beta_1)}{\beta_0 (1-\beta_1) \pi + (1-\beta_0)} & < 0. \label{eq:forwardc}
    \end{align}
    Again by Lemma~\ref{lem:temp1} the denominator of the lefthand side of \eqref{eq:forwardc} must be positive, so therefore the numerator must be negative, implying the result that $\beta_1 > 1$.
    %Suppose to the contrary $\beta_1 \leq 1$, then $\pi(1 - \beta_1) \geq 0$, and also $\beta_0 (1-\beta_1) \pi + (1-\beta_0) > 0$, so their ratio must also be nonnegative. This is a contradiction with \eqref{eq:forwardc} and therefore it must be that $\beta_1 > 1$.

\end{proof}

\subsection{Proof of Corollary 2}
We begin with another short lemma confirming Corollary 2(a), and laying out steps used in the more general proof below.
\begin{lemma}\label{lem:equal}
    Under the conditions of Theorem 1, $\beta_1=OR_{A,X}$ if and only if $\beta_1=\phi$.
\end{lemma}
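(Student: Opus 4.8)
The plan is to prove the two directions separately, with the substantive work in the reverse direction. The forward direction — that $\beta_1 = \phi$ forces $\beta_1 = OR_{A,X}$ — is exactly the content of Corollary~\ref{cor:mag}(a), and I would dispatch it by substituting $\beta_1 = \phi$ directly into the closed form of Equation~\eqref{eq:or2} and simplifying. So the real task is the reverse: assuming $\beta_1 = OR_{A,X}$, deduce $\beta_1 = \phi$. First I would set $\beta_1$ equal to the right-hand side of Equation~\eqref{eq:or2}, writing $D = \beta_0\pi(1-\beta_1) + (1-\beta_0)$ for the denominator. The equation $\beta_1 = 1 - \pi(1-\beta_1)/D$ rearranges to $(1-\beta_1) = \pi(1-\beta_1)/D$, and invoking Lemma~\ref{lem:temp1} (which guarantees $D > 0$) I may clear the denominator without worrying about sign, arriving at the polynomial identity $(1-\beta_1)(D - \pi) = 0$.

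The key computational step is to recognize that $D - \pi$ factors cleanly through $\phi$. Expanding and collecting the terms not involving $\beta_1$ gives $\beta_0\pi + 1 - \beta_0 - \pi = (1-\beta_0)(1-\pi)$, so that $D - \pi = (1-\beta_0)(1-\pi) - \beta_0\pi\beta_1$. Factoring out $\beta_0\pi > 0$ and using the definition $\phi = [(1-\pi)/\pi][(1-\beta_0)/\beta_0]$ yields $D - \pi = \beta_0\pi(\phi - \beta_1)$. Substituting back, the condition $\beta_1 = OR_{A,X}$ is therefore equivalent to $\beta_0\pi(1-\beta_1)(\phi - \beta_1) = 0$, and since $\beta_0, \pi > 0$ the solution set is exactly $\{\,\beta_1 = 1,\ \beta_1 = \phi\,\}$.

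The main obstacle is precisely this second root: the factorization exhibits $\beta_1 = 1$ as an additional solution, so the equivalence as literally stated is not quite exact at the null. I would address this by noting that at $\beta_1 = 1$ one has $OR_{A,X} = 1 = \beta_1$ trivially — this is the degenerate boundary already captured by Corollary~\ref{cor:cor1}(a), where both quantities sit at the null simultaneously — so the $\beta_1 = 1$ root carries no information about the threshold crossing. The substantive equality condition that distinguishes $\beta_1$ from $OR_{A,X}$ in magnitude is thus $\beta_1 = \phi$, and restricting attention to the non-null regime $\beta_1 \neq 1$ makes the stated equivalence exact. I expect the cleanest write-up to present the factored form $(1-\beta_1)(\phi-\beta_1)=0$ explicitly, since it simultaneously displays the trivial null root and the genuine threshold $\phi$, and it is this same factorization that will drive the sign analysis needed for parts (b) and (c) of Corollary~\ref{cor:mag}.
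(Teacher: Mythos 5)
Your proposal is correct, and its core algebra is the same as the paper's: both reduce the equation $\beta_1 = OR_{A,X}$, via the closed form in Equation~\eqref{eq:or2} and the positivity of the denominator guaranteed by Lemma~\ref{lem:temp1}, to a polynomial condition in $\beta_1$. But your packaging is genuinely sharper, and it exposes a defect that the paper's own proof glosses over. The paper argues by a one-directional chain of implications from $\beta_1=\phi$ to $\beta_1=OR_{A,X}$, reversible only after dividing by $(1-\beta_1)$, and it dismisses the case $\beta_1=1$ with the claim that it ``follows directly from Corollary~\ref{cor:cor1}.'' Your factorization $(1-\beta_1)(\phi-\beta_1)=0$ shows that the full solution set of $\beta_1=OR_{A,X}$ is $\{1,\phi\}$, and hence that the lemma's ``only if'' direction is literally false at the null: if $\beta_1=1$ but $\phi\neq 1$ (equivalently $\beta_0\neq 1-\pi$), then $\beta_1=OR_{A,X}$ by Corollary~\ref{cor:cor1}(a) while $\beta_1\neq\phi$. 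The paper's appeal to Corollary~\ref{cor:cor1} in that case salvages only the forward implication --- which happens to be all that Corollary~\ref{cor:mag}(a) actually uses --- so your explicit restriction to $\beta_1\neq 1$, or equivalently restating the lemma as ``$\beta_1=OR_{A,X}$ if and only if $\beta_1\in\{1,\phi\}$,'' is the correct repair. Your closing remark is also apt: the same factorization, with Lemma~\ref{lem:temp1} fixing the sign of the denominator and the sign of $(1-\beta_1)$ determining whether inequalities flip, is precisely what drives the casewise sign analysis the paper tabulates for parts (b) and (c) of Corollary~\ref{cor:mag}.
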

\begin{proof}
If $\beta_1=1$, then the result follows directly from Corollary 1. Therefore, we focus on the case of $\beta_1 \neq 1$. Then, $\beta_1 = \phi$ can be written
       \begin{align}
        \beta_1 & = 1 - \frac{\pi - (1-\beta_0)}{\beta_0\pi}, \label{eq:start}
%        \\ \Leftrightarrow 1-\beta_1 & = \frac{\pi - (1-\beta_0)}{\beta_0\pi},
        \\ \Rightarrow \pi & = \beta_0 (1-\beta_1) \pi + (1-\beta_0),
        \\ \Rightarrow \frac{\pi}{\beta_0 (1-\beta_1) \pi + (1-\beta_0)} & = 1, \label{eq:step1}
        \\ \Rightarrow \frac{\pi(1 - \beta_1)}{\beta_0 (1-\beta_1) \pi + (1-\beta_0)} & = 1-\beta_1,  \label{eq:step2}
        \\ \Rightarrow \beta_1 & =  1 - \frac{\pi(1 - \beta_1)}{\beta_0 (1-\beta_1) \pi + (1-\beta_0)} = OR_{A,X}. \label{eq:final}
    \end{align} 

\end{proof}

    % \begin{align}
    %     OR_{A,X} = \beta_1 & = 1 - \frac{\pi(1 - \beta_1)}{\beta_0 (1-\beta_1) \pi + (1-\beta_0)} \label{eq:start}
    %     \\ \Leftrightarrow \frac{\pi(1 - \beta_1)}{\beta_0 (1-\beta_1) \pi + (1-\beta_0)} & = 1-\beta_1 
    %     \\ \Leftrightarrow \frac{\pi}{\beta_0 (1-\beta_1) \pi + (1-\beta_0)} & = 1
    %     \\ \Leftrightarrow \pi & = \beta_0 (1-\beta_1) \pi + (1-\beta_0) 
    %     \\ \Leftrightarrow \pi - (1-\beta_0) & = \beta_0(1-\beta_1)\pi
    %     \\ \Leftrightarrow \frac{\pi - (1-\beta_0)}{\beta_0\pi} & = (1-\beta_1)
    %     \\ \Leftrightarrow \beta_1 & = \frac{\beta_0\pi}{\beta_0\pi} - \frac{\pi - (1-\beta_0)}{\beta_0\pi} = \frac{1-\pi}{\pi}\frac{1-\beta_0}{\beta_0} \equiv \phi \label{eq:final}
    % \end{align} 
\begin{proof}[Proof of Corollary 2]

Proof of part (a) is shown by Lemma~\ref{lem:equal}. Proof of part (b) begins by setting $\beta_1 > \phi$ and deriving the corresponding relationship between $\beta_1$ and $OR_{A,X}$. Proof of part (c) proceeds analogously setting $\beta_1 < \phi$. This can be done by changing Equation~\eqref{eq:start} into an inequality, and determining the direction of the resulting inequality in Equation~\eqref{eq:final}. 

The inequality at step~\eqref{eq:step1} will not flip after dividing by $[\beta_0 (1-\beta_1) \pi + (1-\beta_0)]$ because by Lemma~\ref{lem:temp1} this expression must be positive. However, at step~\eqref{eq:step2} the inequality may flip depending on the sign of $(1-\beta_1)$. Therefore, we proceed as follows to establish the result case by case:
\begin{itemize}
    \item[(i)] Fix $\phi$ relative to 1, and then fix $\beta_1$ relative to $\phi$ and 1. 
    \item[(ii)] Establish the resulting relationship between $\beta_1$ and $OR_{A,X}$, and thus the relative magnitude of $|\beta_1 - 1|$ and $|OR_{A,X}-1|$
\end{itemize}
Table~\ref{tab:my_label} presents the casewise results, which taken together establish Corollary 2.

\begin{table}[H]
    \centering
    \caption{Casewise proof results of Corollary 2.}
    \label{tab:my_label}    
    \resizebox{1.0\textwidth}{!}{\begin{tabular}{lcccc}\hline
&  &  & Relationship of  & Relationship of \\
Case & Condition on $\phi$ & Condition on $\beta_1$ & $\beta_1$ and $OR_{A,X}$ & $|\beta_1 - 1|$ and $|OR_{A,X}-1|$ \\\hline
   1 &  $(0 < \phi < 1) \Leftrightarrow (1-\pi < \beta_0)$    &  $\beta_1< \phi < 1$   & $\beta_1 < OR_{A,X}$ & $|\beta_1 - 1| > |OR_{A,X}-1|$ \\
%   &  $(0 < \phi < 1) \Leftrightarrow (1-\pi < \beta_0)$    &  $\beta_1 = \phi < 1$   & $\beta_1 = OR_{A,X}$ & $|\beta_1 - 1| = |OR_{A,X}-1|$ \\
   2 &  $(0 < \phi < 1) \Leftrightarrow (1-\pi < \beta_0)$    &  $ \phi < \beta < 1$   & $\beta_1 > OR_{A,X}$ & $|\beta_1 - 1| < |OR_{A,X}-1|$ \\
%    &  $(0 < \phi < 1) \Leftrightarrow (1-\pi < \beta_0)$    &  $ \phi < \beta = 1$   & $\beta_1 = OR_{A,X}$ & $|\beta_1 - 1| = |OR_{A,X}-1|$ \\
   3 &  $(0 < \phi < 1) \Leftrightarrow (1-\pi < \beta_0)$    &  $ \phi < 1 < \beta$   & $\beta_1 < OR_{A,X}$ & $|\beta_1 - 1| < |OR_{A,X}-1|$ \\
   4 &  $(\phi = 1) \Leftrightarrow (1-\pi = \beta_0)$    &  $\beta_1 < \phi = 1$   & $\beta_1 < OR_{A,X}$ & $|\beta_1 - 1| > |OR_{A,X}-1|$ \\
%    &  $(\phi = 1) \Leftrightarrow (1-\pi = \beta_0)$    &  $\beta_1 = \phi = 1$   & $\beta_1 = OR_{A,X}$ & $|\beta_1 - 1| = |OR_{A,X}-1|$ \\
   5 &  $(\phi = 1) \Leftrightarrow (1-\pi = \beta_0)$    &  $1 = \phi < \beta_1$   & $\beta_1 < OR_{A,X}$ & $|\beta_1 - 1| < |OR_{A,X}-1|$ \\
   6 &  $(1 < \phi \leq 1/\beta_0)$ &  $\beta_1 < 1 < \phi$   & $\beta_1 < OR_{A,X}$ & $|\beta_1 - 1| > |OR_{A,X}-1|$ \\
%    & $(1 < \phi \leq 1/\beta_0)$ &  $\beta_1 = \phi = 1$   & $\beta_1 = OR_{A,X}$ & $|\beta_1 - 1| = |OR_{A,X}-1|$ \\
   7 & $(1 < \phi \leq 1/\beta_0)$ &  $1 < \beta_1 < \phi$   & $\beta_1 > OR_{A,X}$ & $|\beta_1 - 1| > |OR_{A,X}-1|$ \\
%   & $(1 < \phi \leq 1/\beta_0)$ &  $1 < \beta_1 = \phi$   & $\beta_1 = OR_{A,X}$ & $|\beta_1 - 1| = |OR_{A,X}-1|$ \\
   8 & $(1 < \phi \leq 1/\beta_0)$ &  $1 < \phi < \beta_1$   & $\beta_1 < OR_{A,X}$ & $|\beta_1 - 1| < |OR_{A,X}-1|$ \\
   9 & $(1 < 1/\beta_0 < \phi )$ &  $\beta_1 < 1 < \phi$   & $\beta_1 < OR_{A,X}$ & $|\beta_1 - 1| > |OR_{A,X}-1|$ \\
 %   & $(1 < 1/\beta_0 < \phi )$ &  $\beta_1 = 1 < \phi$   & $\beta_1 = OR_{A,X}$ & $|\beta_1 - 1| = |OR_{A,X}-1|$ \\
   10 & $(1 < 1/\beta_0 < \phi )$ &  $1 < \beta_1 < \phi$   & $\beta_1 > OR_{A,X}$ & $|\beta_1 - 1| > |OR_{A,X}-1|$ \\\hline
    \end{tabular}}
\end{table}

\end{proof}

\newpage

\section{Additional Simulation Results}\label{app:confsim}

To more systematically evaluate the impact of balancing weighting on performance in the presence of confounding, simulation studies are conducted to evaluate discrimination with respect to $Y$ of the PASC score given by Equation (3).% Equation~\eqref{eq:score}. %CHECK THIS LABEL AT THE END!
Data were generated with the same parameters as the above simulations, except with $\alpha=0.7$ corresponding to a population with 70\% having had SARS-CoV-2 infection, to allow for differences in infection prevalence by confounder level. Marginally $P(Z=1) = 0.55$, and the marginal risk ratio for $A$ by $Z$ was allowed to be either 1.7, or 0.65, to capture negative or positive Z/A associations. Twelve symptoms had a non-zero association with $Z$ with marginal risk ratios varying between 1.35 and 2, and the set of symptoms associated with $Z$ was set to either the 12 symptoms with a true PASC association (`Z/X Overlapping'), or 12 other symptoms without true PASC association (`Z/X Non-Overlapping'). Finally, we considered a setting in which no symptoms were associated with $Z$ marginally. Finally, the PASC-symptom associations were fixed as in the 'Medium Signal' setting above, yielding a total of 6 simulation settings. The metrics and comparators used were the same as previously.

Table~\ref{tab:sim2} reports the selection performance of lasso-penalized logistic regression models with and without balancing weights, across simulation settings. In settings with no Z/X association, there is no mechanism for confounding, and the results are comparable, with the unadjusted approach actually performing slightly better due to increased efficiency. However, in the presence of both Z/X and Z/A associations, the model fit with balancing weights substantially outperforms the unadjusted model, due to the impact of confounding. When Z was associated with non-PASC symptoms, the unadjusted model included many such symptoms, leading to larger model size and lower true negative rate. When Z was associated with the same PASC-associated symptoms, but the associations were in opposing directions, the unadjusted model tended to miss truly non-zero symptoms due to confounding, leading to decreased true positive rate and poor rank-correlation between estimated and true coefficients.

Figure~\ref{fig:confsim1} compares the discrimination statistics between the models in these settings, revealing similar patterns. In the absence of a Z/A association, and therefore no impact of confounding, performance was comparable, with differences due to decreased efficiency of the weighted approach. When confounding affected a non-overlapping set of symptoms, the unadjusted model included more such symptoms associated with infection due to confounding but not truly associated with PASC, leading to degraded performance. The balancing weighted approach better recovered the true PASC-associated symptoms, and therefore the PASC score achieved better discrimination. Finally, when the true PASC-associated symptoms were also subject to confounding, the impact of weighting depended on the direction of confounding bias. In the artificial setting of `Z/X Overlapping, Positive Z/A Association' in which all PASC-associated symptoms also had confounding bias in the same direction, the result was that the unadjusted approach performed better, because the true signal was amplified by confounding. More realistically, when confounding bias counteracted the direction of PASC association as in `Z/X Overlapping, Negative Z/A Association', then the unadjusted approach was unable to select relevant symptoms, leading to degraded performance.

% latex table generated in R 4.2.2 by xtable 1.8-4 package
% Thu Apr 25 12:43:23 2024
\begin{table}[H]
\caption{Lasso symptom coefficient selection performance}
\centering
\resizebox{1.0\textwidth}{!}{\begin{tabular}{lllllllll}
  \hline
&&&&&&&\multicolumn{2}{c}{Est. vs. True}\\
& \multicolumn{2}{c}{Selected} & \multicolumn{2}{c}{TPR} & \multicolumn{2}{c}{TNR} & \multicolumn{2}{c}{Coef. Rank-Correlation$^*$}  \\
& \multicolumn{2}{c}{Median (IQR)} & \multicolumn{2}{c}{Mean (SD)} & \multicolumn{2}{c}{Mean (SD)} & \multicolumn{2}{c}{Mean (SD)}  \\
& Bal. Weights & Unadj. & Bal. Weights & Unadj. & Bal. Weights & Unadj.  & Bal. Weights & Unadj. \\ 
  \hline
  No Z/X Assoc.
Neg. Z/A Assoc. & 8 (4-13) & 8 (5-12) & 0.541 (0.25) & 0.598 (0.244) & 0.894 (0.156) & 0.925 (0.135) & 0.588 (0.123) & 0.663 (0.132) \\ 
No Z/X Assoc.
Pos. Z/A Assoc. & 8 (3.75-13) & 8 (5-12) & 0.526 (0.259) & 0.598 (0.244) & 0.89 (0.17) & 0.925 (0.135) & 0.576 (0.125) & 0.663 (0.132) \\ 
  Z/X Non-Over.
Neg. Z/A Assoc. & 8 (4-12) & 19 (16-22) & 0.532 (0.25) & 0.637 (0.197) & 0.902 (0.153) & 0.597 (0.101) & 0.589 (0.126) & 0.614 (0.066) \\ 
  Z/X Non-Over.
Pos. Z/A Assoc. & 7 (3-12) & 19 (16-22) & 0.51 (0.255) & 0.613 (0.195) & 0.903 (0.148) & 0.585 (0.083) & 0.571 (0.128) & 0.126 (0.1) \\ 
  Z/X Overlapping
Neg. Z/A Assoc. & 6 (3-10) & 5 (3-8) & 0.469 (0.242) & 0.382 (0.194) & 0.938 (0.13) & 0.946 (0.109) & 0.581 (0.143) & -0.235 (0.113) \\ 
  Z/X Overlapping
Pos. Z/A Assoc. & 6 (3-11) & 11 (11-12) & 0.486 (0.261) & 0.929 (0.077) & 0.925 (0.151) & 0.995 (0.028) & 0.579 (0.15) & 0.838 (0.045) \\ 
   \hline
\end{tabular}}
\footnotesize Abbreviations: TNR, True Negative Rate; TPR, True Positive Rate.

*Kendall's $\tau$ between estimated regression coefficients and true Symptom-PASC risk ratios.
\label{tab:sim2}
\end{table}

\begin{figure}[H]
    \centering
    \includegraphics[width=0.9\textwidth,page=1]{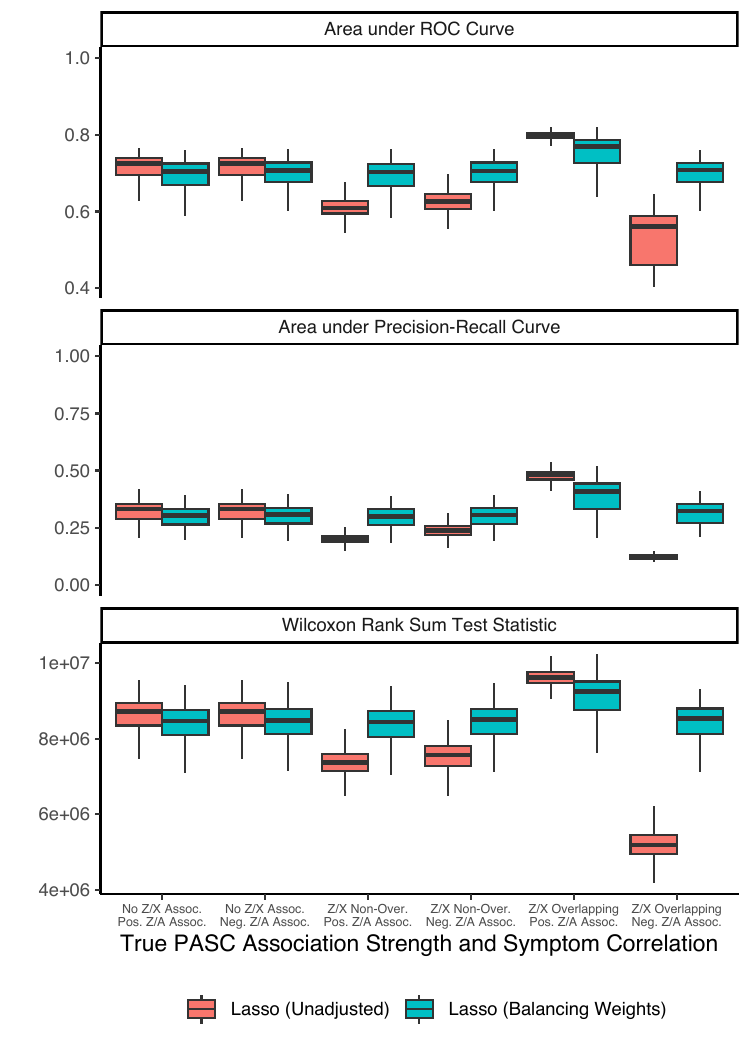}
    \caption{Discriminative performance of estimated PASC scores in distinguishing true PASC status under confounding.}
    \label{fig:confsim1}
\end{figure}

\section{Additional Data Application Results}\label{app:data}

\begin{table}[H]
\centering
\caption{Analysis cohort demographic characteristics by infection status.}
\begin{tabular}{llll}
  \hline
& Overall & Infected & Uninfected \\ 
Category & ($n=9702$) & ($n=8593$) & ($n=1109$) \\ 
  \hline
\textit{Sex assigned at birth} \\
  ~~Male & 2773 (28.6\%) & 2374 (27.6\%) & 399 (36\%) \\ 
  ~~Female & 6929 (71.4\%) & 6219 (72.4\%) & 710 (64\%) \\ 
\textit{Age category} \\
  ~~18-45 & 4745 (48.9\%) & 4369 (50.8\%) & 376 (33.9\%) \\ 
  ~~46-65 & 3656 (37.7\%) & 3158 (36.8\%) & 498 (44.9\%) \\ 
  ~~$>$65 & 1301 (13.4\%) & 1066 (12.4\%) & 235 (21.2\%) \\ 
\textit{Race/Ethnicity} \\
  ~~Non-Hispanic White & 5705 (58.8\%) & 5021 (58.4\%) & 684 (61.7\%) \\ 
  ~~Non-Hispanic Black & 1416 (14.6\%) & 1219 (14.2\%) & 197 (17.8\%) \\ 
  ~~Non-Hispanic Asian & 501 (5.2\%) & 428 (5\%) & 73 (6.6\%) \\ 
  ~~Hispanic & 1591 (16.4\%) & 1472 (17.1\%) & 119 (10.7\%) \\ 
  ~~Mixed race/Other/Missing & 489 (5\%) & 453 (5.3\%) & 36 (3.2\%) \\ 
  \hline
\end{tabular}
\end{table}

\begin{table}[H]
\centering
\caption{Symptom relative frequencies by infection status.}
\resizebox{0.85\textwidth}{!}{\begin{tabular}{lllll}
  \hline
 & & & & Uninfected \\ 
Group & Symptom & Infected, \% & Uninfected, \% & (Balancing Weights), \% \\ 
  \hline
General & Fatigue & 38.4 & 16.7 & 17.9 \\ 
   & Fever/sweats/chills & 12 & 3.7 & 3.9 \\ 
   & P-E malaise & 27.7 & 6.9 & 7.2 \\ 
   & Swelling of legs & 11 & 5.9 & 5.1 \\ 
  Cardiac & Chest pain & 7.8 & 1.1 & 1 \\ 
   & Palpitations & 21 & 6.7 & 7.1 \\ 
  Dermatologic & Hair loss & 18.9 & 9.9 & 10 \\ 
   & Skin color changes & 7 & 2.4 & 2.9 \\ 
   & Skin pain & 2.8 & 1 & 1.1 \\ 
   & Skin rash & 8.4 & 4.8 & 5 \\ 
  Eye & Vision & 10 & 3.2 & 3 \\ 
  Ear & Hearing & 21.3 & 13.7 & 11.1 \\ 
  Gastrointestinal & Abdominal pain & 5.2 & 2 & 2.8 \\ 
   & Dry mouth & 14.5 & 6 & 5.1 \\ 
   & GI & 25.3 & 10 & 11.1 \\ 
   & Teeth & 12.1 & 6.9 & 5.8 \\ 
  Metabolic & Thirst & 13.9 & 4 & 4 \\ 
  Musculoskeletal & Back pain & 15.2 & 9.2 & 8.9 \\ 
   & Foot pain & 7.6 & 4.4 & 4.3 \\ 
   & Joint pain & 16.6 & 9.6 & 8.5 \\ 
   & Muscle pain & 14 & 5.8 & 5.9 \\ 
   & Weakness & 13.4 & 5 & 4.5 \\ 
  Neurologic & Abnormal movements & 4.4 & 0.7 & 0.6 \\ 
   & Brain fog & 20.4 & 4.4 & 5.4 \\ 
   & Dizziness & 22.9 & 7.1 & 8.1 \\ 
   & Headache & 13.1 & 3.4 & 3.7 \\ 
   & Smell/taste & 12.6 & 1.9 & 2 \\ 
   & Tremor & 6.7 & 3 & 2.6 \\ 
  Psychiatric & Anxiety & 11.5 & 6 & 6.9 \\ 
   & Depression & 11.3 & 5.1 & 4.9 \\ 
   & Sexual desire/capacity & 18.8 & 8.1 & 8.1 \\ 
   & Sleep disturbance & 11.8 & 3.8 & 3.7 \\ 
  Respiratory & Chronic cough & 11.7 & 3.2 & 3 \\ 
   & Shortness of breath & 11.2 & 2.3 & 2.5 \\ 
   & Sleep apnea & 17.5 & 10 & 9.5 \\ 
   & Throat pain & 3 & 0.5 & 0.6 \\ 
  Urinary & Bladder & 13 & 7.3 & 7.1 \\ 
   \hline
\end{tabular}}
\end{table}

\begin{figure}[H]
    \centering
    \includegraphics[width=1\textwidth]{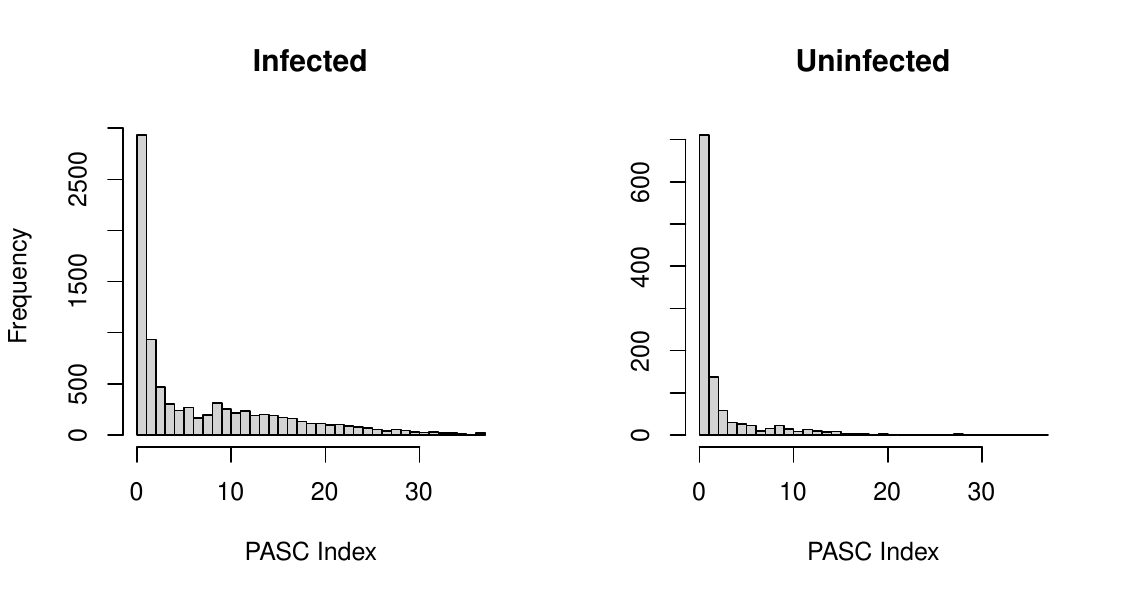}
    \caption{Histogram of PASC index derived from lasso-penalized logistic regression with balancing weights, stratified by infection status.}
    \label{fig:hist}
\end{figure}

\begin{figure}[H]
    \centering
    \includegraphics[width=0.9\textwidth]{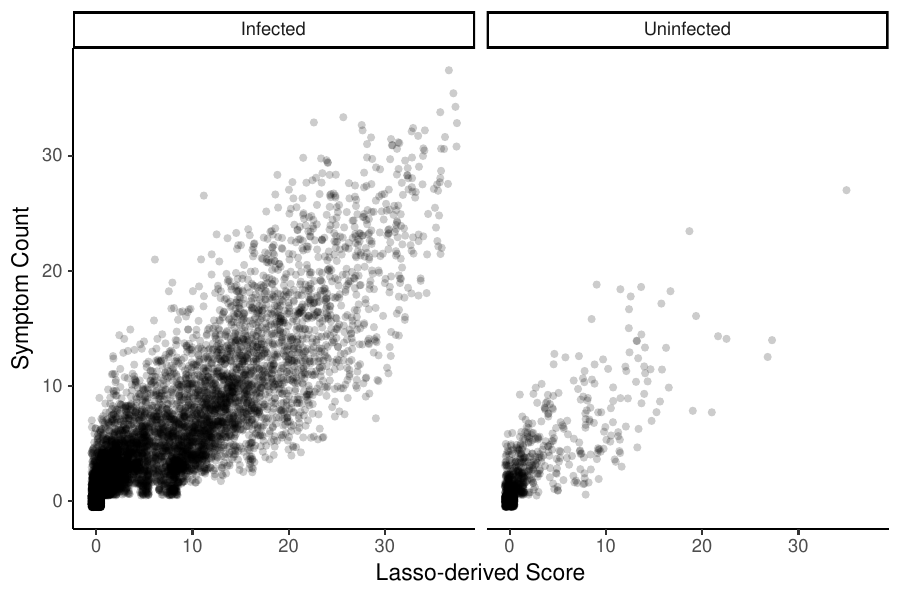}
    \caption{Jittered scatterplot of PASC index derived from lasso-penalized logistic regression with balancing weights versus symptom count, stratified by infection status.}
    \label{fig:scat}
\end{figure}

\end{document}